\pgfplotsset{compat=1.18}
\providecommand{\textcite}{\citet}
\providecommand{\autocite}{\citep}
\providecommand{\citeauthor}{\citet}
\newcommand{\E}{\mathbb{E}}
\theoremstyle{plain}
\newtheorem{theorem}{Theorem}[subsection]
\newtheorem{proposition}{Proposition}[subsection]
\newtheorem{conjecture}{Conjecture}[subsection]
\theoremstyle{definition}
\newtheorem{definition}{Definition}[subsection]
\theoremstyle{remark}
\newtheorem{remark}{Remark}[subsection]
\renewcommand{\geq}{\geqslant}
\renewcommand{\leq}{\leqslant}
\renewcommand{\epsilon}{\varepsilon}
\newcommand{\SupervisorName}{}         
\newcommand{\supervisor}[1]{\renewcommand{\SupervisorName}{#1}}
\begin{document}

\begin{titlepage}
  \begin{center}
    \vspace*{1cm}
    \Huge
    Convex Order and Arbitrage

    \vspace{1.5cm}
    \Large
    \textbf{Erica Zhang}

    \vspace{0.5cm}
    Supervisors:\\
    Professor Marcel Nutz,\\
    Professor Johannes Wiesel

    \vfill
    A thesis presented for the Honors degree of\\
    B.A.\ in Mathematics–Statistics

    \vspace{0.8cm}
    \includegraphics[width=0.4\textwidth]{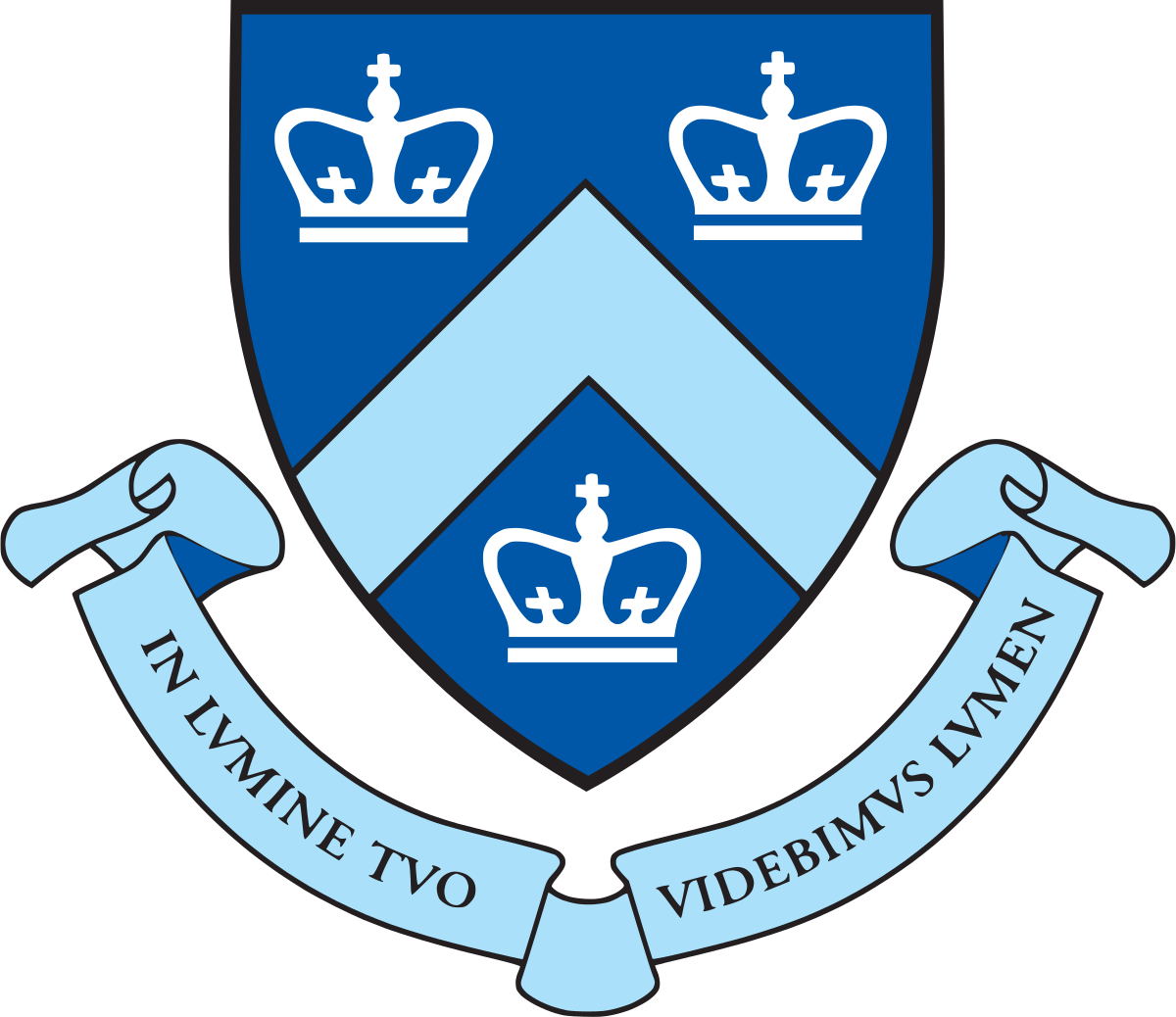}

    \Large
    Department of Mathematics\\
    Department of Statistics \\
    Columbia University\\
    March 31st, 2023
  \end{center}
\end{titlepage}

\section*{Acknowledgements}

Writing this thesis would not have been possible without the mentorship and support from my supervisors. I would like to thank Professor Marcel Nutz for kindly agreeing to supervise my thesis, despite having a heavily packed schedule. I sincerely appreciate his feedback and patience. I would like to also thank my supervisor Professor Johannes Wiesel for giving me an incredible amount of guidance and support throughout my time at Columbia. I would like to thank him for mentoring me with great patience and kindness, introducing me to the topic of optimal transport, and inspiring me to develop a keen interest in stochastic processes and mathematical finance. His immense knowledge and passion for his field of study have greatly encouraged me in many aspects of my academic research. My growth today would have been very difficult to achieve without him. For that, I am eternally grateful.

I would also like to express my gratitude for Professor Ioannis Karatzas, who kindly gave me guidance and enlightened me with key insights on functional portfolio theory during his free time, despite my not being enrolled in his class. I would like to thank him for the feedback he gave me on Section 6 of the thesis as well as key ideas that helped formulate Conjecture 6.3.1. I would like to thank him for sharing his passion in functional portfolio theory with me as well as introducing me to the brand-new field of stochastic portfolio theory.

\clearpage

{
  \hypersetup{linkcolor=black}
  \tableofcontents
}

\title{Convex Order and Arbitrage}
\author{Erica Zhang}
\supervisor{Prof. A.B. Supervisor}
\date{March 2023}

\maketitle

\begin{abstract}
Wiesel and Zhang [2023] \cite{wiesel2023optimal} established that two probability measures $\mu,\nu$ on $\mathbb{R}^d$ with finite second moments are in convex order (i.e. $\mu \preceq_c \nu$) if and only if $W_2(\nu,\rho)^2-W_2(\mu,\rho)^2 \leq \int |y|^2\nu(dy) - \int |x|^2\mu(dx).$ Let us call a measure $\rho$ maximizing $W_2(\nu,\rho)^2-W_2(\mu,\rho)^2$ the optimal $\rho$. This paper summarizes key findings by Wiesel and Zhang, develops new algorithms enhancing the search of optimal $\rho$, and builds on the paper through constructing a model-independent arbitrage strategy and developing associated numerical methods via the convex function recovered from the optimal $\rho$ through Brenier's theorem. In addition to examining the link between convex order and arbitrage through the lens of optimal transport, the paper also gives a brief survey of functionally generated portfolio in stochastic portfolio theory and offers a conjecture of the link between convex order and arbitrage between two functionally generated portfolios. 
\end{abstract}

\section{Introduction and Main Results}
Fix two probability measures $\mu,\nu \in \mathcal{P}(\mathbb{R}^d)$ with 
$$\int |x| \mu(dx) < \infty, \ \int |y| \nu(dy) < \infty.$$
Then $\mu$ and $\nu$ are defined to be in convex order (denoted by $\mu \preceq_c \nu$) iff $$\int f d\mu \leq \int f d\nu \ \ \text{for all convex functions }f: \mathbb{R}^d \rightarrow \mathbb{R}.$$ The above integrals take values in $(-\infty,\infty]$ as any convex function is bounded from below by an affine function. Convex order is a form of stochastic ordering, and it plays an essential role in mathematical finance. One of the most important theories in this domain is Strassen's theorem [1965] \cite{strassen1965existence}, which states that $\mu \preceq_c \nu$ if and only if $\mathcal{M}(\mu,\nu)$ - the set of martingale laws on $\mathbb{R}^d \times \mathbb{R}^d$ with marginals $\mu$ and $\nu$ - is non-empty. The well-established first fundamental theorem of asset pricing (FTAP) states that a market model has no arbitrage if and only if there exists a martingale deflator. If we can somehow make sense of $\mu$ and $\nu$ as certain financial instruments in the market, we can combine Strassen's theorem with the first FTAP and develop a test on arbitrage through convex order. This is our original motivation for the paper. A natural link of probability measures $\mu$ and $\nu$ with assets in the market is through Breeden-Litzenberger formula [1979] \cite{breeden1978prices}, where $\mu$ and $\nu$ represent laws of stock prices at maturities. The details of the formula are discussed in Section 1.2. 

There are a number of applications of convex order in the financial market as well as in financial modeling. To name a few, Guyon [2019] \cite{guyon2019inversion} studied the inversion of convex ordering in the VIX market and established that a continuous model on the S\&P 500 index (SPX) jointly calibrating to a full surface of SPX implied volatilities and to the VIX smiles exists if and only if for each time $t$, the local variance is smaller than the instantaneous variance in convex order. Carr et al. [2018] \cite{carr2018options} focused on convex order in risk-neutral densities for call options. In examining realized variance option and options on quadratic variation normalized to unit expectation, they found that if monotonicity in maturity for call options at a fixed strike is satisfied, the risk-neutral densities are increasing in the convex order. In addition, they also developed modeling strategies guaranteeing an increase in convex order for the normalized quadratic variation. Last but not least, the role of convex order in dependence modelling and risk aggregation is discussed extensively in paper such as Tchen [1980] \cite{tchen1980inequalities} and Bernard et al. [2016] \cite{bechmann2016variance}. 

In this paper, we offer a novel perspective by connecting convex order with the existence of arbitrage opportunities in the market and by developing trading strategies that leverage convex order in financial market. Specifically, the paper considers arbitrage in the trading of financial assets in a market and relative arbitrage between two functionally generated portfolios, with a focus on the former. Numerical implementation in this paper can be found in our \href{https://github.com/Ericavanee/Convex-Order-and-Arbitrage}{\color{blue}{Github repository}}. To make any of the convex order based arbitrage strategies practically realizable, it is critical to be able to identify convex order in general dimensions. To do this, we leverage the new optimal transport-based characerization of convex order method developed by Wiesel and Zhang, introduced in the section below. 

\subsection{Convex Order through the 2-Wasserstein Distance}
In paper \href{https://papers.ssrn.com/sol3/papers.cfm?abstract_id=4121931}{``An optimal transport based characterization of convex order"}, we, Wiesel and Zhang [2023] \cite{wiesel2023optimal} developed a new method of characterizing convex order in general dimensions using tools from optimal transport. Specifically, two probability measures $\mu,\nu$ with finite second moments are in convex order (denoted $\mu \preceq_c \nu$) if and only if 
\begin{align}
W_2(\nu,\rho)^2-W_2(\mu,\rho)^2 \leq \int |y|^2\nu(dy) - \int |x|^2\mu(dx)
\end{align}

This result is obtained through combining classical optimal transport duality with Brenier's theorem, see Brenier [1991] \cite{brenier1991polar}. Optimal transport goes back to the seminal works of Monge [1981] \cite{monge1781memoire} and Kantorovich [1958] \cite{kantorovich1958translocation}. It is concerned with the problem of transporting probability distributions in a cost-optimal way. We refer to Rachev and Rüschendorf [1998] \cite{rachev1998mass} and Villani [2003, 2008] \cite{villani2008optimal} for an overview. In this paper, Wiesel and Zhang attempt to develop a method to characterize convex order in general dimensions beyond the real line. To the best of our knowledge, this link between convex order and optimal transport can first be found in Carlier [2008] \cite{carlier2008remarks}. Contrary to Carlier's proof, however, we make no assumptions on the probability measures $\mu,\nu$ except for that they have finite second moments. In particular, we prove the theorem in a general case where there is no need to assume that the measures are absolutely continuous wrt. the Lebesgue measure (see proof of Theorem 1.1 in Wiesel and Zhang \cite{wiesel2023optimal}). 

To develop model-independent strategy based on this optimal transport-based characterization, we introduce the Breeden-Litzenberger formula. 

\subsection{Breeden-Litzenberger formula}
The Breeden-Litzenberger (B-L) formula is a fundamental tool in option pricing theory that has been widely used in finance and economics since its introduction in 1978, see Breeden and Litzenberger [1978] \cite{breeden1978prices}. The formula builds on the concept of state-contingent financial claims, which are financial instruments whose payoff depends on the occurrence or non-occurrence of certain events. Options are a common example of state-contingent claims, as their payoffs depend on the price of the underlying asset at expiration. The B-L formula relates the prices of options on the same underlying asset with different strike prices and maturities to the prices of the underlying asset and other financial instruments, such as bonds. The set-up is the following.

Consider a market with a risk-free asset with interest rate $r =0$ and $d$ risky assets. Assume that the risky asset $S = (S^1,...,S^d)$ over a time horizon $T$ is a Markov process. Now consider a European call option portfolio with options at the same maturity $T$ and traded at all strikes. Then the B-L formula enables us to recover the joint probability distribution of the underlying risky assets by observing the call option prices in the market's European call option portfolios. 

We give a brief overview of the derivation of the B-L results. Fix a maturity $T$. Consider a martingale measure $\mu$ and denote the law of $S$ under $\mu$ by $\mathbb{P}$. Then the price of a general European contingent call option of maturity $T$ and strike $K$, denoted $C(T,K)$, is given by the following:
\begin{align*}
   C(T,K) &= \mathbb{E}_{\mu}[(S_T-K)^+] \\ &= \int_K^\infty (x-K)\mu(dx)
\end{align*}
Thus,
\begin{align*}
    \frac{\partial}{\partial K}C(T,K) &= \int_{K}^\infty -\mu(dx) \\ &=-\mu([S_T \geq K])
\end{align*}
and 
\begin{align*}
    \frac{\partial^2}{\partial K^2}C(T,K) &= \mu(dK)
\end{align*}
implying that observing all call prices yields the marginal density of $S_T$. The above results is exactly the B-L formula.

Indeed, one of the key insights of the B-L formula is that the risk-neutral probability of the underlying asset's price changing by a certain amount can be inferred from the prices of options on the same asset. This allows one to price options using the Black-Scholes formula or other option pricing models that assume a constant volatility, even if the actual volatility of the underlying asset is unknown or time-varying. However, some pratical issues remain. For example, interpolating observed call prices with a smooth function is very difficult, and the interpolation scheme might not be arbitrage-free. Additionally, the formula relies on the assumption of no-arbitrage, which may not hold in certain market conditions or during periods of market stress. Several papers have extended or modified the B-L formula to address some of its limitations. For instance, Rubinstein [1994] \cite{rubinstein_1994} proposed a modification to the B-L formula that allows for non-log-normal underlying asset distributions. Dumas, Fleming, and Whaley [1998] \cite{dumas_fleming_whaley_1998} proposed a modified B-L formula for valuing interest rate options that accounts for term structure effects. Wiggins [1987] \cite{wiggins_1987} proposed a method for estimating the implied risk-neutral probability density function using the B-L formula.

In our paper, we restrict ourselves to the classical setting of the B-L formula and leave the discussion of relaxed assumptions to another paper. We then introduce the final element for the construction of our model-free trading strategy. 

\subsection{Model-free version of arbitrage strategy}
We consider in this paper the so-called \textit{model-independent approach} to the study of arbitrage and trading strategy, i.e. the consideration of the set of all models contingent with the prices observed in the market. The search of robust methods to mathematical finance problems such as asset-pricing and super-replications has been gaining traction especially recently. Hobson [1998] \cite{hobson_1998} pioneered the study of \textit{model-independent arbitrage} with respect to \textit{semi-trading strategies} and robust hedging. Since then, numerous literature considering the necessary and sufficient conditions for the existence of calibrated arbitrage-free models under different settings given an observation of prices of finitely (or infinitely) European call options is developed, see Cousot [2007] \cite{cousot_2004}, Buehler [2006] \cite{buehler_2006}, and Carr and Madan [2005] \cite{carr_madan_2005}. The model-free version of the Fundamental Theorem of Asset Pricing is studied by Riedel [2011] \cite{riedel_2011} and Acciaio et al. [2018] assumes the prices of finitely many options and considers the robust version of the FTAP with respect to a weak notion of arbitrage under the assumption of a compact state-space in a one-period setting. Moreover,  given a path-dependent derivative $\Phi$, Cox and Obloj [2011] \cite{cox_obloj_2011a} and others made a series of important advancements in the development of a robust version of the \textit{Super-Replication Theorem}, where they determined explicitly and demonstrated the equivalence in values of the values of upper martingale price $p^M(\Phi)$, i.e. the supremum of the expected payoff over admissible martingale measures and the minimal endowment $p^R(\Phi)$ required for super-replication. Following their work, Nutz [2013] \cite{nutz_2013} extends the setup and studies optimal super-replication strategies where super-replication is understood not in a point-wise sense, but rather w.r.t a family of probability measures.

In addition to robust trading strategies developed in a more ``traditional", stochastic integral-based mathematical finance literature, stochastic portfolio theory, such as the idea of the \textit{functionally generated portfolio}, also enables the construction of a robust trading strategy that does not make any assumptions on the price dynamics of the underlying assets, see Fernholtz, Karatzas, Ruf [2018] \cite{fernholz2016volatility} for an example. Recently, there is also an emergence of literature on combining tools from machine learning with traditional mathematical models that describe the dynamics of underlying financial assets to develop robust statistical tools for financial markets (in option pricing or calibration) and arbitrage. Gierjatowicz et al. [2020] \cite{article}, for example, studied the performance of \textit{neural-SDE} where they combine feed-forward neural networks with a stochastic volatility model to construct robust hedging and pricing techniques. 

Up to Section 6, we restrict ourselves to the concept of \textit{model-independent arbitrage} via semi-static trading strategies, as introduced by Davis and Hobson [2007] \cite{davis_hobson_2007}. The resulting trading strategy for model-independent arbitrage is commonly known as the  \textit{calendar spread}. We refer to a more elaborate discussion in Section 5. 

\section{Setup and Notation}
For the majority of the paper (up to Section 6), we follow the same market setup as in Acciaio [2013] \cite{acciaio_et_al_2018}. We consider a finite, discrete time setting with some time horizon $T$ and a market consisting of $d$ Markovian risky assets with price process $(S_t)_{t = 0}^T = (S_t^1,...,S_t^d)_{t = 0}^T$ and a normalized risk free asset $B = (B_t)_{t=0}^T$.

For the purpose of discussing the optimal transport-based characterization of convex order, we consider probability measures $\mu,\nu \in \mathcal{P}(\mathbb{R}^d)$ with finite first moments. We define cost functionals $$C(\mu,\rho):= \sup_{\pi \in \Pi(\mu,\rho)}\int \langle x,y \rangle \pi(dx,dy), \ \ C(\nu,\rho):= \sup_{\pi \in \Pi(\nu,\rho)}\int \langle x,y \rangle \pi(dx,dy),$$
where $\langle \cdot, \cdot \rangle$ denotes the scalar product and $\Pi(\cdot,\cdot)$ denotes the set of couplings. As in the paper by Wiesel and Zhang, we define $$\mathcal{P}^1(\mathbb{R}^d):= \{\rho \in \mathcal{P}(\mathbb{R}^d): \text{supp}(\rho) \subseteq B_1(0)\},$$ where $B_1(0)$ denotes the Euclidean ball around $0$ with a radius of $1$. In addition, for two matrices $\textbf{M},\textbf{N}$, we denote matrix multiplication simply via $\textbf{M}\textbf{N}$. We use $\textbf{M}^T$ to denote the transpose of a matrix $\textbf{M}$. We use bold font of mathematical symbols to denote its vector form, and we denote the discretized samples from distributions $\mu,\nu, \rho$ as $\textbf{a}$, $\textbf{b}$, $\textbf{r}$ respectively. Denote the discretized matrix of joint probability distribution $\pi$ as $\gamma$.

\section{Optimizing \texorpdfstring{$\rho$}{rho}}
 
In this section, we discuss numerical algorithms we used in search of the optimal $\rho^* \in \mathcal{P}^1(\mathbb{R}^d)$ with domain $B_1(0)$ such that 
\begin{align*}
\rho^* = \inf_{\rho \in \mathcal{P}^1(\mathbb{R}^d)}(C(\nu,\rho)-C(\mu,\rho))
\end{align*}
Recall that we denote the discretized samples from distributions $\mu,\nu, \rho$ as $\textbf{a}$, $\textbf{b}$, $\textbf{r}$ respectively. If measure $\mu$, $\nu$ are finitely supported, then we can solve the problem numerically by solving the earth mover's distance (EMD) problem:
\begin{align*}
\min_{\textbf{r} \in B_1(0)} \{-\min_{\gamma_\mu} \{ M_\mu\gamma_\mu \} +  \min_{\gamma_\nu} \{M_\nu\gamma_\nu \}\} \hspace{1cm}
\end{align*}
where $M_\mu,M_\nu$ denote repectively the squared-Euclidean distance matrix between $\mu$ and $\rho$ and $\rho$ and $\nu$, such that 
$\gamma_\mu\mathbf{1} = \textbf{a}, \gamma_\mu^{\textit{T}}\mathbf{1} = \textbf{r}, \pi_\nu\mathbf{1} = \textbf{b}, \gamma_\nu^{\textit{T}}\mathbf{1} = \textbf{r}$ and $\gamma_\mu,\gamma_\nu \geq 0$, where $\textbf{1}$ denote a column matrix of 1s.
\newline
\newline 
It is clear from the definition that the optimization of measure $\rho$ hinges on the numerical exploration of the convex set $\mathcal{P}^1({\mathbb{R}^d})$. We (Wiesel and Zhang) developed two Bayesian optimization-based methods for this. Implementation can be found in the \href{https://github.com/johanneswiesel/Convex-Order}{Github repository}. In the implementation, we use the \href{https://pythonot.github.io}{POT package} to compute optimal transport distances and \href{http://hyperopt.github.io/hyperopt/}{HyperOpt} for Bayesian optimization. 

\subsection{Indirect Dirichlet Sampling}
We consider finitely supported $d-$dimensional measures $\rho$ on an $p$-evenly partitioned grid $G$ on $B_1(0)$, which are dense in $\mathcal{P}^1(\mathbb{R}^d)$ in the Wasserstein topology. The measure $\rho$ is sampled from the Dirichlet distribution on the space $\mathbb{R}^{g-1}$, $g \in \mathbb{N}$, with density 
\begin{align*}
f(x_1,...,x_{g}; \alpha_1,...,\alpha_{g}) = \frac{1}{B(\pmb{\alpha})}\Pi_{i=1}^{g} x_i^{\alpha_i-1}
\end{align*}
for $x_1,...,x_{g} \in [0,1]$ satisfying $\sum_{i=1}^{g} x_i = 1$. Here $\alpha_1,...,\alpha_{g} >0$, $\pmb{\alpha} := (\alpha_1,...,\alpha_{g})$ and $B(\pmb{\alpha})$ denotes the Beta function. Fixing $g$ grid points $\{k_1, . . . , k_g\}$ in $B_1(0)$, we are able to indirectly sample $\rho$ by considering any realization of a Dirichlet random variable $(X_1,...,X_{g})$ as a probability distribution assigning probability mass $X_i$ to the grid point $k_i$, $i \in \{1,...,g\}$, on which $\rho$ is supported upon. We name this the ``Indirect Dirichlet" method. Let us define $$\rho(\pmb{\alpha}) = \sum_{i=1}^{g}X_i\delta_{k_i}.$$ To sample the optimal probability measure $\rho^*(\pmb{\alpha})$ parametrized by $\pmb{\alpha}$, we exploit Bayesian hyperparameter optimization of parameter $\pmb{\alpha}$ via a Tree-structured Parzen Estimator Approach (TPE) (Bergstra, Bardenet et al.) to optimize $\pmb{\alpha}$. This leads to the following algorithm:

\begin{algorithm}[H]
\caption{Basic algorithm for Indirect Dirichlet method}

\textbf{Input:} samples (or histograms) $\textbf{a}$, $\textbf{b}$ drawn respectively from probability measures $\mu,\nu$, maximal number of Bayesian optimization evaluations $N$, optimizing parameter search space lower bound $l$ and upper bound $u$, number of discretized grid points $g$. 
\newline
\textbf{Initialization}: $\pmb{\alpha} := (\alpha_1,...,\alpha_{g})$. 
\begin{algorithmic}
\State Generate a grid $B_1(0)$ of $g$ equidistant points.
\For{Trial 1: $N$}
    \State Sample a Dirichlet random variable $\rho(\pmb{\alpha})$ parametrized by $\pmb{\alpha}$ supported on $M$. Use TPE to update $\pmb{\alpha}:= (\alpha_1,...,\alpha_{g})$, where $$\pmb{\alpha}^* = \arg\min_{\pmb{\alpha} \in [l,u]^{g}}(C(\rho(\pmb{\alpha}),\nu)-C(\rho(\pmb{\alpha}),\mu))$$
\EndFor
\State \Return $C(\rho(\pmb{\alpha}^*),\nu)-C(\rho(\pmb{\alpha}^*),\mu)$, $\pmb{\alpha}^*$.
\end{algorithmic}
\end{algorithm}
A major computational challenge in Algorithm 1 is the efficient evaluation of $C(\rho(\pmb{\alpha}),\nu)$ and $C(\rho(\pmb{\alpha}),\mu)$. We offer two variants:
\begin{itemize}
\item \textit{Indirect Dirichlet method with histograms:} If we have access to finitely supported approximations $\textbf{a}$ and $\textbf{b}$ of $\mu$ and $\nu$ respectively and the measure $\rho$ is supported on $G$, then we can solve the linear programs $C(\textbf{a},\rho)$ and $C(\textbf{b},\rho)$ as is standard in optimal transport theory. Due to its relative lower complexity, this method enjoys a higher computational efficiency. Note that the histogram method is generally available to 1D transportation problems. We refer to \texttt{histograms.py} for a sampling from commonly seen distributions. 
\item \textit{Indirect Dirichlet method with samples:} we draw from a number of samples from $\mu$ and $\nu$ respectively and denote these empirical samples as $\textbf{a}, \textbf{b}$ respectively. As before, we assume that the probability measure $\rho$ is finitely supported on a grid $G$. We then solve the linear programs $C(\textbf{a},\rho)$ and $C(\textbf{b},\rho)$. Because the transportation costs need to be evaluated at each sample point, this method has a higher complexity than the histograms method above, although it is useful when a histogram is computationally expensive to obtain. A more detailed discussion of analysis of runtime is discussed in section 2.4.
\end{itemize}

All in all, the efficient evaluation of global optimizer $\pmb{\alpha} = (\alpha,...,\alpha_{g})$ for problem $\inf (C(\rho(\pmb{\alpha}),\nu)-C(\rho(\pmb{\alpha}),\mu))$ depends on the effectiveness of the indirect Dirichlet method and the Bayesian optimization procedure to explore the convex space $\mathcal{P}^1(\mathbb{R}^d)$. For this, we offer another way to model $\rho$ as well as propose another potential optimizing procedure, as described in the sections below. 

\subsection{Direct Randomized Dirichlet Sampling}
An alternative to Algorithm 1 is to directly draw samples from a distribution $\rho \in \mathcal{P}^1(\mathbb{R}^d)$ based on a Dirichlet distribution. In this case, $\rho$ is determined by the Dirichlet distribution and then randomly sampled.  We refer to this as the ``Direct randomized Dirichlet method". 

\begin{algorithm}[H]
\caption{Basic algorithm for Direct randomized Dirichlet method}\label{alg:cap2}

\textbf{Input:} samples drawn from probability measures $\mu,\nu$, maximal number of Bayesian optimization evaluations $N$, optimizing parameter search space lower bound $l$ and upper bound $u$, dimension $d$ of measure $\rho \in \mathcal{P}^1(\mathbb{R}^d)$, target size $t$ of samples drawn from $\rho$. 
\newline
\textbf{Initialization}: $\pmb{\alpha} := (\alpha_1,...,\alpha_{d+1})$. 
\begin{algorithmic}
\For{Trial 1: $N$}
    \State Initiate an array $A$ to contain $t$ samples drawn from Dirichlet random variable $D(\pmb{\alpha})$ with randomized signs: $A = [\textbf{D}_1,..,\textbf{D}_t]$, where each sample $\textbf{D}_i$ has dimension $d+1$. Then flatten array $A$.   \While{$\sqrt{\sum_{i=1}^t(\textbf{D}_i)^2} > 1$}
        \State Select without replacement $t$ $d-$dimensional vectors $\textbf{D}_1,...,\textbf{D}_t$.
    \EndWhile      
    \State Then $$\rho = \frac{1}{t}\sum_{i=1}^t\delta_{\textbf{D}_i}.$$ Use TPE to update $\pmb{\alpha} := (\alpha_1,...,\alpha_{d+1})$, where $$\pmb{\alpha}^* = \arg\min_{\pmb{\alpha} \in [l,u]^{d+1}}(C(\rho(\pmb{\alpha}),\nu)-C(\rho(\pmb{\alpha}),\mu))$$
\EndFor
\State \Return $C(\rho(\pmb{\alpha}^*),\nu)-C(\rho(\pmb{\alpha}^*),\mu))$, $\pmb{\alpha}^*$, $\rho(\pmb{\alpha}^*)$.
\end{algorithmic}
\end{algorithm}

\subsection{Optimization with \texttt{cvxpy}}
An alternative approach to optimize $\rho$ would be to apply convex optimization algorithms in \href{https://www.cvxpy.org/}{\texttt{cvxpy}}. It is well known that the Wasserstein metric can be computed as a linear program as formulated by the Kantorovich-Rubinstein duality. For the purpose of the algorithm, we consider like before discrete distributions $\mu,\nu$ on a grid $G$ on the Euclidean ball $B_1(0)$. Using the squared $L^2$ norm, we still use the squared-Euclidean distance for the distance matrix $M$. Then the p-Wasserstein distance between $\mu,\nu$ can be computed numerically: $$W_p(\mu,\nu) = (\min_{\gamma}M\gamma)^{\frac{1}{p}} = (\min_{\gamma} \sum_{i,j}M_{i,j}\gamma_{i,j})^{\frac{1}{p}},$$ such that $\gamma\textbf{1} = \textbf{a}$, $\gamma^T\textbf{1} = \textbf{b}$, and $\gamma \geq 0$. We offer here the \texttt{cvxpy} algorithm for computing the Wasserstein metric between probability measures $\mu,\nu \in \mathcal{P}^1(\mathbb{R}^d)$ in general dimensions using samples. We assume uniform sample weights. And that sample size of $\textbf{a}$ drawn from $\mu$ is $n$, and that of $\textbf{b}$ drawn from $\nu$ is $m$. Also, let $\textbf{m},\textbf{n}$ denote column vector of $m,n$ respectively.

\begin{algorithm}[H]
\caption{Basic cvxpy algorithm for computing the Wasserstein metric}
\textbf{Input:} samples $\textbf{a}$, $\textbf{b}$ drawn from probability measure $\mu, \nu$. 
\newline
\textbf{Initialization}: optimal coupling between $\mu$ and $\nu$ is initialized as a non-negative \texttt{cvxpy} variable $P$ of shape $(n,m)$. 
\begin{algorithmic}
\State Calculate distance matrix $M$. Apply \texttt{cvxpy} solver on problem
$$\min_{P} \text{trace}(M^TP)$$
under constraints
$$P\textbf{1} = 1/\textbf{n}, \ \ P^T\textbf{1} = 1/\textbf{m}.$$
\State \Return $\min_{P} \text{trace}(M^TP)$, $P$

\end{algorithmic}
\end{algorithm}

For the purpose of computing our loss function, we offer the general procedures of using cvxpy to solve our $2-$fold optimization problem by approximating the Wasserstein metric as a linear program computed as in the algorithm above. To see how this works, consider first the below problem:
\begin{align}
\inf_{\mu \in \mathcal{P}^1(\mathbb{R}^d)}C(\mu,\rho)+C(\rho,\nu),
\end{align}
which is a variation of the general Wasserstein Barycenter problem:
$$\arg\min_\mu \sum_{i = 1}^k \lambda_i\mathcal{W}(\mu,\nu_i),$$
where $\mu,\nu_1,...,\nu_k$ are probability distributions supported on $\mathbb{R}^d$, vector $\lambda \in \mathbb{R}^k$ of non-negative weights summing to 1, and $\mathcal{W}(\cdot)$ denotes the squared 2-Wasserstein distance. For the implementation of the algorithm, we have the below formulation of the problem, where we have chosen $L^2$-norm as the distance metric:
$$\min_{\gamma^k \in \Pi(\mu,\nu^k)} \sum_{k = 1}^{2} \sum_{i,j}\|X_i-Y_j\|_2^2\gamma_{i,j}^k$$ such that $\sum_{j =1}\gamma_{ij}^k = \textbf{a}_i \ \forall_{k,i}$, $\sum_{i = 1}\gamma^k_{ij} = \textbf{b}^k_j \ \forall_{k,j}$, and $\gamma^k_{ij} \geq 0 \ \forall_{k,i,j}$.

We now give a basic working algorithm to the problem (2) in 1D with distributions $\mu,\nu$ inputted as histograms over a fixed grid in a \texttt{cvxpy} framework. The output is the histogram $\textbf{r}$ of the Wasserstein Barycenter $\rho$ over the same grid. 
\begin{algorithm}[H]
\caption{Basic cvxpy algorithm for Wasserstein Barycenter in 1D}
\textbf{Input:} histogram of probability measure $\mu$ (denoted $\textbf{a}$), histogram of target probability measure $\nu$ (denoted $\textbf{b}$) with a fixed grid $g$ of size $n$.
\newline
\textbf{Initialization}: probability mass function of $\rho$ over grid $g$ initialized as a non-negative cvxpy variable of length $n$; optimal transport plan $\pi^\mu$ between $\mu,\rho$ and $\pi^\nu$ between $\rho,\nu$ initialized as non-negative \texttt{cvxpy} variable of shape $(n,n)$, and non-negative \texttt{cvxpy} variable $t^\mu,t^\nu$.

\begin{algorithmic}
\State Calculate fixed distance matrix  \textbf{M} of grid $G$.
\State Apply \texttt{cvxpy} solver on problem
$$\min_{\textbf{r}, \gamma^\mu, \gamma^\nu, t^\mu, t^\nu} t^\mu+t^\nu$$
under constraints
$$t^\mu >= \textbf{1}^TM\gamma^\mu\textbf{1}, \ \ t^\nu >= \textbf{1}^TM\gamma^\nu\textbf{1};$$

$$(\gamma^\mu\textbf{1})^T = \textbf{r}, \ \  (\gamma^\nu\textbf{1})^T = \textbf{r};$$

$$\gamma^\mu\textbf{1} = \textbf{a}, \ \ \gamma^\nu\textbf{1} = \textbf{b}.$$
\Return $\textbf{r}$
\end{algorithmic}
\end{algorithm}

We emphasize that the above algorithm only works for 1D, where inputs are histograms. This formulation of the problem can be effectively computed in \texttt{cvxpy} because the grid, and as a result the cost matrix computation, is fixed and does not vary with a different choice of $\rho$. To relax 1D histogram into general samples, one idea is to assign samples to the nearest grid points. We will leave this exploration to later work. 

Since the Wasserstein Barycenter problem is a convex problem, it can be solved in \texttt{cvxpy}. However, our targeted optimization problem $$\inf_{\rho \in \mathcal{P}^1{\mathbb{R}^d}}(C(\rho,\nu)-C(\rho,\mu))$$
is non-convex. Similar procedures as in Algorithm 3 applies when writing algorithm for the optimization problem. Assume that $\mu,\nu$ are $d-$dimensional probability distributions. We give below a general algorithm of how this problem would be implemented in \texttt{cvxpy}, were it to be convex. As before, we let $\textbf{m},\textbf{n}$, \textbf{t} denote column vector of natural number $m,n,t$ respectively.
\begin{algorithm}[H]
\caption{Tentative cvxpy algorithm for $\min_{\rho \in \mathcal{P}^1{\mathbb{R}^d}}(C(\rho,\nu)-C(\rho,\mu))$}
\textbf{Input:} $m$ samples drawn from probability measure $\mu$ (denoted $\textbf{a}$) $n$ samples drawn from probability measure $\nu$ (denoted $\textbf{b}$), sample size $t$ of measure $\rho$ (its samples denoted \textbf{r}). 
\newline
\textbf{Initialization}: samples \textbf{r} drawn from target distribution $\rho$ of shape ($t$,$d$); non-negative optimal coupling $P_a$ from $\mu$ and $\rho$ of shape ($m$,$t$) and non-negative optimal coupling $P_b$ from $\rho$ to $\nu$ of shape ($t$,$n$).
\begin{algorithmic}
\State Compute and express in \texttt{cvxpy} variable the squared Euclidean distance matrix $M_a$ from $\mu$ to $\rho$ and distance matrix $M_b$ from $\rho$ to $\nu$:
$$M_a = \|\textbf{a}-\textbf{r}\|_2^2, \ \ M_b = \|\textbf{b}-\textbf{r}\|_2^2$$
\State Apply cvxpy solver on problem
$$\min_{P_a,P_b} \text{trace}(M_a^TP_a)+\text{trace}(M_b^TP_b)$$
under constraints
$$P_a\textbf{1} = 1/\textbf{m}, \ P_a^T\textbf{1} = 1/\textbf{t}, \ P_b\textbf{1} = 1/\textbf{t}, \ P_b^T\textbf{1} = 1/\textbf{n}.$$
\Return $\textbf{r}$, $P_a$, $P_b$

\end{algorithmic}
\end{algorithm}

It is noteworthy that this is not a running algorithm due to the non-convexity of our cost function. Two restrictions are present. One is that by formulating the problem in a two-fold optimization problem (unlike the previous 1D formulation of the barycenter problem), the cost matrix becomes convoluted and is not DPE compliant. This is a problem when one tries to write a generalized dimension Wasserstein Barycenter solver in cvxpy, as for Algorithm 4, a fixed distance metric is key. While this is solvable as in the proposal above, the non-convexity of the problem fundamentally prevents it from being implemented in \texttt{cvxpy}.  

\subsection{Complexity and Performance}
We divide the section into two parts: (i). a simple overview and analysis of algorithm run-time and complexity and (ii). a presentation of experiments and results. 

\subsubsection{Analysis of Complexity}
All three of our working algorithms are based on the Bayesian optimization solver by \texttt{HyperOpt}. We begin with a discussion of the mechanism underlying Bayesian optimization and an analysis of its complexity using the adapted tree-structured parzen estimator approach (TPE). For a more elaborate discussion, we refer to Bergstra et al.[2013] \cite{bergstra2013making} and Shahriari et al.[2016] \cite{shahriari2016taking}.

Bayesian optimization is a type of hyper-parameter optimization which tackles the problem of optimizing a loss function over a graph-structured configuration space (or observation history) $\mathcal{H} = (x_i, f(x_i))^{n}_{i=1}$. In HyperOpt, a configuration space serves as the search space of hyper-parameters and is defined by certain generative process that draws valid samples conforming to constraints. The structure of the Bayesian optimization algorithm follows the prototype of Sequential Model-Based Global Optimization (SMBO) algorithms, which is commonly used to search for a global minimizer (or maximizer) of an unkonwn or costly fitness function $f: \mathcal{X} \rightarrow \mathbb{R}$: $$\pmb{\alpha}^* = \arg \min_{\pmb{\alpha} \in \mathcal{X}}f(\textbf{x}).$$ In our case, $f(\pmb{\alpha}) = C(\rho(\pmb{\alpha}),\nu)-C(\rho(\pmb{\alpha}),\mu))$ is a fitness function. 

SMBO-based Bayesian optimization optimizes  $\pmb{\alpha}^*$ by prescribing a prior belief over the possible objective functions that are then sequentially refined via Bayesian posterior updating. This updating process is guided by sequentially induced acquisition functions $a_n : \mathcal{X} \rightarrow \mathbb{R}$, which evaluates the utility of candidate points for the next evaluation of $f$ and decides which hyper-parameters to evaluate next.

For the adapted tree-structured parzen estimator approach (TPE), the acquisition functions are guided by the Expected Improvement (EI) criterion. Let $y^*$ be certain threshold value and let $y = f(\pmb{\alpha})$, Expected Improvement utilizes the utility function $$u(y) = \max(0,y^*-y)$$
and evaluates $f$ at the point that, in expectation under some model $M$ of the hyper-parameters, makes the largest magnitude of improvement on the benchmark value $y^*$. The EI acquisition function is thus specified by the following: $$EI_{y^*}(\pmb{\alpha}) := \int_{-\infty}^{y^*} (y^*-y)p_M(y|\pmb{\alpha})dy.$$  

The TPE algorithm models $p(\pmb{\alpha}|y)$ through Bayes' rule $p(\pmb{\alpha}|y) = \frac{p(y|\pmb{\alpha})}{p(y)}$ and using non-parametric densities for the distributions of the configuration prior. Let $l,u$ denote lower bound and upper bound respectively. In our algorithms, we describe the configuration space of target variable $\pmb{\alpha} = (\alpha_1,...,\alpha_n)$ by modeling them as uniform variables over $[l,u]^n$. Using different observations $\{\pmb{\alpha}^{(1)},...,\pmb{\alpha}^{(k)}\}$ in $\text{Unif}([l,u]^n)$, the TPE algorithm substitutes the configuration prior by a truncated Gaussian mixture through a learning process that defines $p(\pmb{\alpha}|y)$ using two densities over the configuration space $\mathcal{X}$: 
$$p(\pmb{\alpha}|y)= 
\begin{cases}
    l(\pmb{\alpha}),& \text{if } y < y^*\\
    g(\pmb{\alpha}),              & \text{otherwise}
\end{cases}$$
Here $l(\pmb{\alpha})$ is the density formed via the ``good" observations $\{\pmb{\alpha}^{(i)}\}$ whose associated loss $f(\pmb{\alpha}^{(i)})<y^*$ whereas $g(\pmb{\alpha})$ is the density formed via the ``bad" remaining observations. The TPE algorithm avoid choosing a minimizing point less than the actual minimizer by choosing $y^*$ to be certain quantile $\gamma$ of the observed $y$ values such that $p(y < y^*) = \gamma$. This parameter $\gamma$ determines the size of the split between the ``good" observation group and the ``bad" ones. Thus, as stated in section 4.1 (Bergstra, Bardenet et. al) the EI-based acquisition function for TPE is proportional the following: 
\begin{align*}
EI_y^{*} \propto (\gamma + \frac{g(\pmb{\alpha})}{l(\pmb{\alpha})}(1-\gamma))^{-1}.
\end{align*}
The proof to this result is contained in the Appendix A.

\begin{algorithm}[H]
\caption{Basic algorithm for Bayesian optimization via TPE}
\textbf{Input:} fitness function $f$, initial model $M$ of hyper-parameter prior distribution, threshold quantile $\gamma$, acquisition function $S$, maximum number of evaluations $N$.
\newline
\textbf{Initialization}: initial acquisition model $M_0 = M$, configuration space $\mathcal{H} = \emptyset$.
\begin{algorithmic}
\For{Trial $1:N$}
    \State Sample a set $H$ of parameter $\pmb{\alpha}$ from $M_{t-1}$.
    \State Evaluate $f(\pmb{\alpha})$ for each one of them.
    \State Split $H$ into good and bad groups (defined above) based on $\gamma$.
\State Update $\pmb{\alpha}^*$ via the following
$$\arg \min_{\pmb{\alpha}}S(M_{t-1},\pmb{\alpha}) := \arg \min_{\pmb{\alpha}} (\gamma + \frac{g(\pmb{\alpha})}{l(\pmb{\alpha})}(1-\gamma))^{-1},$$ where $\pmb{\alpha}$ is selected from the model $M_{t-1}$.
\State Augment $\mathcal{H} = \mathcal{H} \cup (\pmb{\alpha}^*,f(\pmb{\alpha}^*))$.
\State Do a Bayesian update on $M_t$ based on observation history $\mathcal{H}$.
\EndFor
\State \Return $\pmb{\alpha}^*, \mathcal{H}$ 

\end{algorithmic}
\end{algorithm}

Alternatives to TPE include the Gaussian Process Approach (GP), mutual search, and random search. Bergtra et al. [2013] \cite{bergstra2013making} study the performance of each of the Bayesian optimization alternatives and found that while on the convex dataset, both TPE and GP algorithms converge to a validation score (as defined in Bergtra et al. [2013] \cite{bergstra2013making}) of $13\%$, in generalization, TPE's best model had $14.1\%$ error, outperforming GP by $2.6\%$, mutual search by $4.9\%$, and random search by $2.9\%$. In general Bergtra et al. found that models found by the TPE algorithm perform better on the experiment datasets than its alternatives. TPE-based Bayesian hyperparameter optimization is used widely in learning models and shows a steady performance, see Dzikiene et al. [2020] \cite{kapociute2020intent}. Given the typical performance of TPE-based Bayesian optimization algorithm, we choose a TPE-based approach in the numerical implementation. 

The complexity of the presented algorithms depends roughly on three components: the complexity of TPE-based Bayesian optimization framework, the complexity of optimizing the acquisition function, and the complexity of the evaluation of the fitness function. From the discussion in Section 4.1 of Bergstra et al. [2011] \cite{bergstra2011algorithms}, the TPE algorithm maintains sorted lists of observed variables in $\mathcal{H}$, and for each iteration of evaluations, the TPE algorithm scales linearly in the number of variables observed and linearly in the number of variables being optimized. In our case, discounting the costs of numerical evaluations of fitness function, the Indirect Dirichlet algorithm (both histogram and samples-based) has a cost of $O(N|H|p^d)$, whereas the Direct Dirichlet algorithm has a cost proportional to $O(N|H|d)$, where $H$ is the size of sampled hyper-parameters for each trial and $N$ is the maximum number of evaluations.

The minimization of the acquisition function typically uses gradient descent. The time complexity of the gradient descent algorithm for minimizing a function depends on the size of the input and the number of iterations required to converge to a minimum. In general, the time complexity of the gradient descent algorithm is $O(kd)$, where $k$ is the number of iterations required to converge and $d$ is the dimensionality of the input. 

The most expensive step of the computation, however, resides in the evaluation of the fitness function. As mentioned before, our fitness function is $f(\pmb{\alpha}) = C(\rho(\pmb{\alpha}),\nu)-C(\rho(\pmb{\alpha}),\mu))$ for $\pmb{\alpha} \in [l,u]^d$. In both algorithms, we phrase the problem in terms of a linear program, the details of which can be found in previous discussions. Python package \href{https://pythonot.github.io}{POT} implements one of the most efficient exact OT solvers. Our algorithms leverage the regular OT solver \texttt{ot.emd2}. When the input measures are coded as histograms, the OT problem is represented as a linear programming problem and solved with a runtime complexity of $O(n^3)$, where $n$ is the number of bins in the histograms. The computation of the distance matrix between the histograms required by \texttt{ot.emd2} is done in $O(n^2)$ time complexity using a metric such as the Euclidean distance. On the other hand, when the input measures are coded as samples, the computation has a runtime complexity of $O(n^3 \log(n))$ with the \texttt{ot.emd2} solver, although the computation of the distance matrix between the samples remains $O(n^2)$. This is because whereas the histogram-based algorithm uses a fixed grid and thus requires only one calculation of cost matrix, sample-based algorithm calculates the cost matrix in each iteration and scales linearly in the maximum number of evaluations. As shown in the section below, for $d=1,2$ and a medium samples size $n=100$, the histogram-based Indirect Dirichlet method yields consistently a much lower run-time than sample-based Indirect Dirichlet method and Direct Dirichlet method. 

Discounting costs of other computations, for large $n$, the run-time of the Indirect Dirichlet method with histograms is $O(N|H|p^dn^3)$, the Indirect Dirichlet method with samples is $O(N|H|p^d(n^3\log(n))$, and the complexity of the Direct Dirichlet method is $O(N|H|d(n^3\log(n))$. Hence, when $n$ is large and $d$ small (as in our experiments), the Indirect Dirichlet method with histograms performs the best in termf of runtime. However, when $d$ is large, the Direct Dirichlet method will outperform the alternatives. 

For a large scale OT problem in high dimension, POT's Sinkhorn solver \texttt{ot.sinkhorn} has a complexity of $O(n^2)$. Define $$\Omega(\gamma) = \sum_{i,j}\gamma_{i,j}\log(\gamma_{i,j}),$$ then \texttt{ot.sinkhorn} solves the entropic regularization optimal transport problem
\begin{align*}
\min_{\gamma \in \Pi(\mu,\nu)} \{\gamma M + \text{reg}\cdot\Omega(\gamma) \} \hspace{1cm}
\end{align*}
such that 
$\gamma\mathbf{1} = \textbf{a}, \gamma^{\textit{T}}\mathbf{1} = \textbf{b}, \gamma \geq 0$. Generally, \texttt{ot.sinkhorn} speeds up the numerical computation. However, in our case where $n$ and $d$ are small, the improvement in runtime of \texttt{ot.sinkhorn} is not obvious. We thereby choose the regular OT solver \texttt{ot.emd2} for our algorithms.

\subsubsection{Experiments and Results}
We test the performance of algorithms presented above by looking at their ability of detecting the convex order. As in the paper by Wiesel and Zhang, we present three simple cases in which the absence or presence of the convex order is easy to see:

\begin{itemize}
    \item \textbf{Example 1.} $\mu = \mathcal{N}(0,\sigma^2I)$ and $\nu = \mathcal{N}(0,I)$ for $\sigma^2 \in [0,2]$ for $d = 1,2$.
    \item \textbf{Example 2.} $\mu=\frac{1}{2}(\delta_{-1-s}+\delta_{1+s})$ and $\nu = \frac{1}{2}(\delta_{-1}+\delta_1)$ for $s \in [-1,1]$.
    \item \textbf{Example 3.} $\mu = \frac{1}{4}(\delta_{(-1,0)}+\delta_{(1,0)}+\delta_{(0,1)}+\delta_{(0,-1)})$ for $s \in [-1,1]$.
\end{itemize}

Let $$V(\mu,\nu) := \inf_{\rho \in \mathcal{P}^1(\mathbb{R}^d)}(C(\nu,\rho)-C(\mu,\rho)).$$ Then by Theorem 1.2 (Wiesel and Zhang), we have the relationship $$\mu \preccurlyeq_c \nu \text{ iff } V(\mu,\nu) \leq 0.$$ For each example and each pair of $(\mu,\nu)$, we plot $V(\mu,\nu)$ for the Bayesian-optimization-based three methods above: indirect Dirichlet with histograms, indirect Dirichlet with samples, and direct Dirichlet with samples. We refer to the github repository for a more elaborate discussion.

\begin{figure}[H]
\begin{subfigure}{.5\textwidth}
  \centering
  \includegraphics[width=1\linewidth]{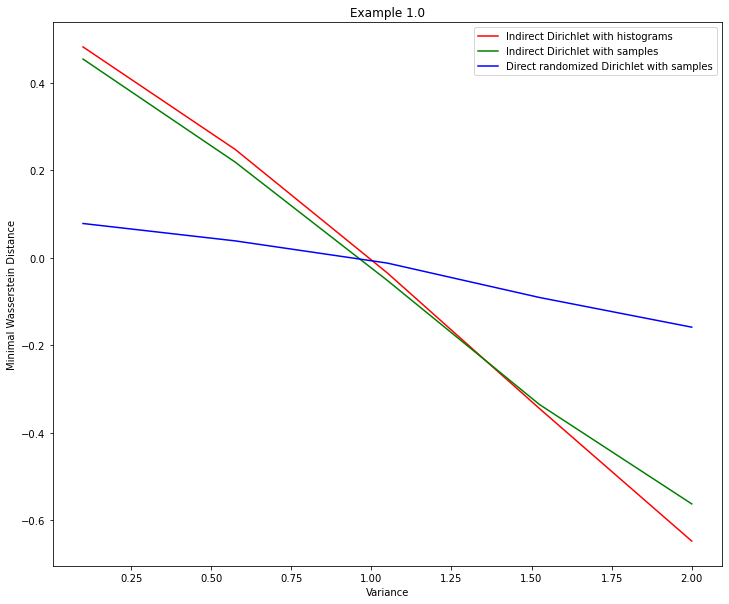}
  \caption{Example 1.0 in 1D}
  \label{fig:sfig12}
\end{subfigure}%
\begin{subfigure}{.5\textwidth}
  \centering
  \includegraphics[width=1\linewidth]{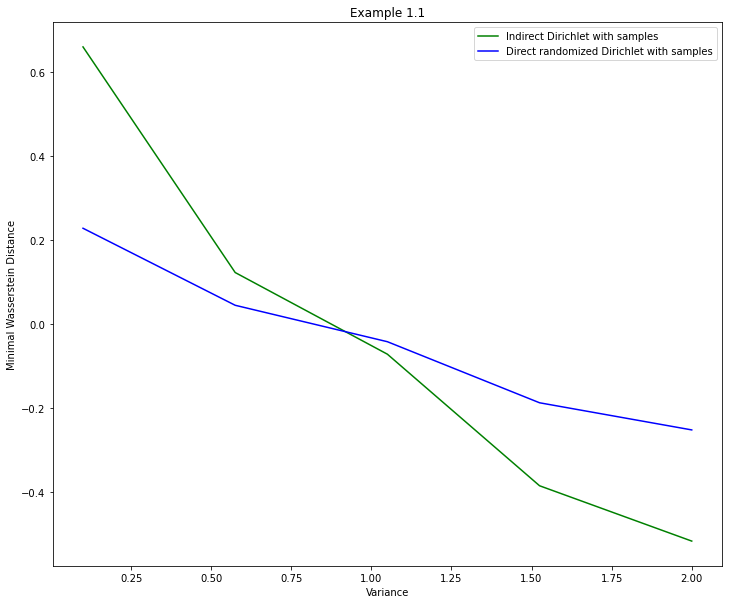}
  \caption{Example 1.0 in 2D}
  \label{fig:sfig24}
\end{subfigure}

\begin{subfigure}{.5\textwidth}
  \centering
  \includegraphics[width=1\linewidth]{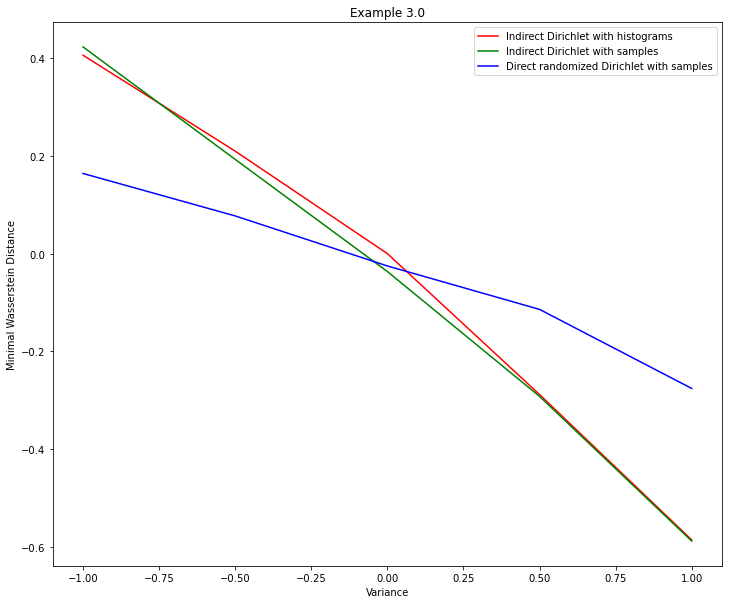}
  \caption{Example 2.0}
  \label{fig:sfig25}
\end{subfigure}
\begin{subfigure}{.5\textwidth}
  \centering
  \includegraphics[width=1\linewidth]{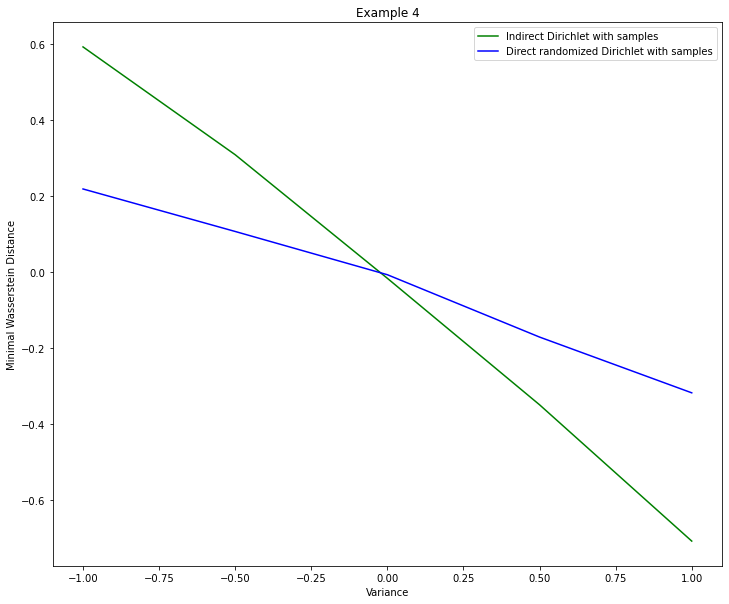}
  \caption{Example 3.0}
  \label{fig:sfig26}
\end{subfigure}
\caption{Values of different estimators of $V(\mu,\nu)$ plotted against $\sigma$ for Examples 1-3. All plots use $N=100$ samples.}
\label{fig:fig1}
\end{figure}

Discounting the numerical errors in sampling, all estimators seem to detect convex order. The direct randomized Dirichlet method is numerically easier to implement; however, it does not seem to explore the $\mathcal{P}^1(\mathbb{R}^d)$-space as well as the other two alternatives. We leave the analysis of this deficiency to another paper. 

As explained in Section 3.4.1, since we only conducted experiments in dimensions we expect that the Indirect Dirichlet method with Histograms to consistently yield the lowest runtime and the runtimes of the other methods to be much higher. This is indeed the case for our experiments. To recall, this is because when working with samples, the weights of the empirical distributions are constant, while the OT cost matrices $\textbf{M}_a$ and $\textbf{M}_b$ in the implementation have to be re-computed in each iteration, which is very costly. For the histogram method, the weights $\rho$ change, while the grid stays constant -- and thus $\textbf{M}_a$ and $\textbf{M}_b$.

\section{Recovering Convex Function \texorpdfstring{$\hat{f}$}{f}}
Fix a $\rho \in \mathcal{P}^1(\mathbb{R}^d)$. The Monge-Kantorivich duality states that
\begin{align*}
C(\mu,\rho) &= \sup_{\pi \in \Pi(\mu,\rho)} \int \langle x,y \rangle \pi(dx,dy) \\
&= \inf_{f \oplus g \geq c} \int fd\mu+\int g d\rho \\
&= \inf_{f \oplus g \geq c, f, g \text{ proper, convex}} \int fd\mu+\int g d\rho
\end{align*}
Take an optimal convex pair $(\hat{f},\hat{g})$ from above for $C(\mu,\rho)$. For the 2-Wasserstein distance, Brenier's theorem in the form of Villani [2003, 2008] Theorem 2.12 \cite{villani2008optimal} gives that the optimal transport map is given by the gradient of a scalar convex function $\hat{f}$: $\rho = \nabla \hat{f}_*\nu$ if $\nu$ is absolutely continuous with the Lebesgue measure. In this section, we present algorithm and graphic example of recovering the convex function $\hat{f}$. 

Given an optimal $\rho \in \mathcal{P}^1(\mathbb{R}^d)$ found through Section 3 in the form of $m$ generated samples, we obtain the optimal transport plan $\pi(\nu,\rho)$. We recover the transport map $\nabla \hat{f}$ through disintegration. Specifically, $\nabla f$ is recovered through taking conditional expectation $\int x \pi_y(dx)$, where $(\pi_y)_{y \in \mathbb{R}^d}$ denotes the conditional probability distribution of $\pi$ with respect to its second marignal $\nu$, i.e. $$\pi_y = \delta_{\nabla \hat{f}(y)} = \pi(x|y).$$ This is a standard technique (see e.g. Deb et al.[2021] \cite{deb2021rates} for details). We then use methods from \texttt{scipy.interpolate} to interpolate the gradient field $\nabla \hat{f}$ in $d-$dimensions. 

To obtain the convex function $\hat{f}$, we consider the following problem. Given a gradient vector field $$\nabla \hat{f}(x_1,...,x_d) := \begin{pmatrix} \frac{\partial \hat{f}}{\partial x_1} \ ... \ \frac{\partial \hat{f}}{\partial x_d} \end{pmatrix},$$ we want to compute the inverse gradient and to recover the scalar field $\hat{f}: \mathbb{R}^d \rightarrow \mathbb{R}$. In the below sections, we introduce algorithms for the computation of the inverse gradient for $d =1,2$.  

\subsection{Recovering \texorpdfstring{$\hat{f}$}{f} in 1D}
Consider a gradient $\nabla \hat{f}(x):\mathbb{R} \rightarrow \mathbb{R}$. Then $\nabla \hat{f}(x) = \frac{d\hat{f}}{dx}$. Hence, $\hat{f}$ is easily recovered up to an additive constant through the Fundamental Theorem of calculus by integrating $\nabla \hat{f}(x)$ with respect to $x$: $$\hat{f}(x) = \int_{c}^x \nabla \hat{f}(y)dy$$ where $c$ is the empirical lower bound. 
\subsubsection{Algorithm for 1D}
To numerically compute the integral above, we use method \texttt{integrate} from \texttt{InterpolatedUnivariateSpline}. To make the pseudo-code clearer, we use list index notation. That is, given a list $l$, $l[i]$ denotes the $i$-th element of $l$. A matrix $M$ in this case is represented by a two-dimensional array, and each element $i,j$ can be accessed through calling index: $M[i][j]$. This leads to the following algorithm.

\begin{algorithm}[H]
\caption{Basic algorithm for recovering $\hat{f}$ in 1D}
\textbf{Input:} $n$ samples drawn from measure $\nu$ (denoted \textbf{b}) and $m$ samples drawn from the optimal measure $\rho$. 

\textbf{Initialization:} weights of measure $\mu$ and $\nu$ are initialized to be uniform, empty list $l$, empty list $l_y$.
\begin{algorithmic}
\State Compute cost matrix $M$ from the samples and calculate optimal transport plan $\pi$ via \texttt{ot.emd}.
\For{iteration $i$ $1:m$}
    \State Append $\frac{\rho[i] \cdot \textbf{G}[i]}{\sum_j^nG[i][j]}$ to $l$.
\EndFor
\For{iteration $j$ $1:n$}
\State Append \texttt{mean}($l^T[j]$) to $l_y$.
\EndFor
\State Interpolate gradient of $\hat{f}$: $$\nabla \hat{f} = \texttt{InterpolatedUnivariateSpline}(\textbf{b},l_y,k=1).$$
\State Obtain $\hat{f}$ by integrating each sample in \textbf{b} from empirical lower bound \texttt{min}(\textbf{b}) via $\nabla \hat{f}$.\texttt{integral}.
\State \Return $\nabla \hat{f}$, $\hat{f}$.
\end{algorithmic}
\end{algorithm}

We offer two methods of smoothing the resulting graph of $\hat{f}$: through building an interpolated spline via \texttt{scipy.interpolate} and using lowess via \texttt{statsmodels.api.nonparametric}. Since both methods yield similar results, we will include only the lowess-based graphs in the below section.

\subsubsection{Experiments and Results for 1D}
We implement Algorithm 7 taking 100 samples from $\mu = \mathcal{N}(0,1)$ and 100 samples from $\nu = \mathcal{N}(0,5)$. We choose a partition grid of $100$ for the computation of $\rho$ and the maximum number of evaluations for Bayesian optimization to be 100. 

\begin{figure}[H]
\begin{subfigure}{.5\textwidth}
  \centering
  \includegraphics[width=1\linewidth]{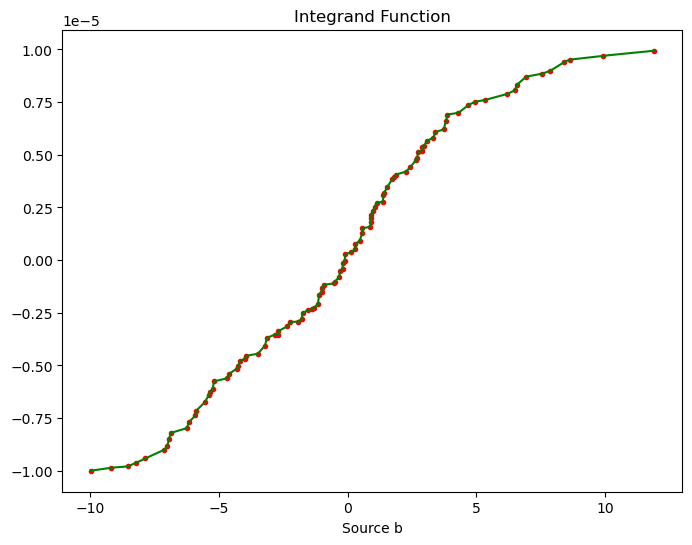}
  \caption{Graph of $\nabla \hat{f}$}
  \label{fig:sfig13}
\end{subfigure}%
\begin{subfigure}{.5\textwidth}
  \centering
  \includegraphics[width=1\linewidth]{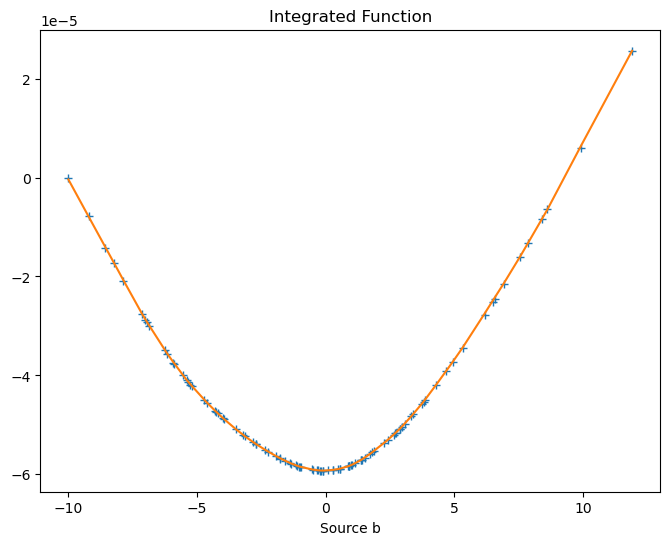}
  \caption{Graph of $\hat{f}$}
  \label{fig:sfig27}
\end{subfigure}
\caption{Graph of $\nabla \hat{f}$ and $\hat{f}$ for example 4.1.2}
\label{fig:fig2}
\end{figure}
As expected, the integrand $\nabla \hat{f}$ appears to be affine and the recovered function $\hat{f}$ convex and follow the form $\hat{f}(x) = ax^2$ for some scalar $a \in \mathbb{R}$.

\subsection{Recovering \texorpdfstring{$\hat{f}$}{f} in 2D}
Consider the gradient $\nabla \hat{f}(x,y) :=\begin{pmatrix} \frac{\partial \hat{f}}{\partial x} \ \frac{\partial \hat{f}}{\partial y} \end{pmatrix}$. In implementation, we observe an empirical vector field $g$ of $\hat{f}$ on some bounded and connected domain $\Omega \in \mathbb{R}^n$. We expect that the real gradient field $\nabla \hat{f}(x,y)$ to be near $g$. Then numerically, to recover $\hat{f}(x,y)$ with its gradient field near some given observed vector field $g$, we consider the minimization of $\|\nabla \hat{f} - g\|$ for some choice of norm and over some class of scalar functions mapping from $\mathbb{R}^2$ to $\mathbb{R}$. Consider the $L^2(\Omega)$-norm of the pointwise Euclidian norm, we have the below minimization problem: $$\arg \min_{\hat{f}} \|\nabla \hat{f} - g \|^2 = \int_{\Omega}|\nabla \hat{f}(\textbf{x})-g(\textbf{x})|^2d\textbf{x},$$ where $\Omega \subset \mathbb{R}^n$ is bounded and connected. From the calculus of variation, it is well established that (see Farbebäck[2007] \cite{4409176}, Song et al. [1995] \cite{song1995phase}) the inverse gradient $\hat{f}(x,y)$ is the solution to a Poisson equation with inhomogeneous Neumann boundary conditions,
$$\begin{cases}
    \nabla^2\hat{f}(\textbf{x}) = \nabla \cdot g(\textbf{x}),& \forall \textbf{x} \in \Omega\\
    \nabla \hat{f}(\textbf{x}) \cdot \textbf{n} = \textbf{n} \cdot g(\textbf{x}), & \forall\textbf{x} \in \partial \Omega,
\end{cases}$$
where $\partial \Omega$ is the boundary of $\Omega$, $\textbf{n}$ is a normalized and outwards directed normal vector to the boundary, and $\nabla^2$ is the Laplace operator. By Theorem 3 in Evans [2010] \cite{evans2010partial}, the necessary condition for the existence of a solution to the Neumann problem, as in our case, is that $$\int_\Omega -\nabla g(\textbf{x}) d\textbf{x} = - \int_{\partial \Omega} g dS,$$ where $ - \int_{\partial \Omega} g dS$ denotes the line integral with respect to the boundary $\partial \Omega$. Note that this is satisfied because of the Generalized Stoke's theorem, which states that $$\int_\Omega d\beta = \oint_{\partial \Omega} \beta dS$$ for any differential form $\beta$ and domain $\Omega$. Thus, we are guaranteed to find a solution.

\subsubsection{Algorithm for 2D}
Numerical methods to solve the Poisson equation with Neumann boundary conditions are well developed. Farbebäck et al. [2007] \cite{4409176}, for example, developed efficient multigrid-based solver to solve the above Poisson equation with an inhomogeneous Neumann boundary conditions over irregular domains. In our implementation, we work with python package \href{https://fenicsproject.org/olddocs/dolfin/1.4.0/python/demo/documented/neumann-poisson/python/documentation.html#:~:text=For%20a%20domain%20%CE%A9%E2%8A%82,c%20by%20the%20above%20equations.}{\texttt{dolfin}}.

\begin{algorithm}[H]
\caption{Basic algorithm for recovering $\hat{f}$ in 2D}
\textbf{Input:} $n$ samples drawn from measure $\nu$ (denoted \textbf{b}) and $m$ samples drawn from the optimal measure $\rho$. 

\textbf{Initialization:} weights of measure $\mu$ and $\nu$ are initialized to be uniform, empty list $l$, empty list $l_y$, \texttt{dolfin} function space with Lagrange multiplier $W$. 
\begin{algorithmic}
\State Compute cost matrix $\textbf{M}$ from $\nu$ to $\rho$ and calculate optimal transport plan $\textbf{G}$ via \texttt{ot.emd}.
\For{iteration $i$ $1:n$}
    \State  Append $\frac{\rho \cdot \textbf{G}[i]}{\sum_j^nG[i][j]}$ to $l$.
\EndFor
\State Initialize gradient domain $\Omega$ based on observed gradient points. Set empirically observed gradient field $g = l.$
\State Set trial function (represents the unknown $\hat{f}$) to be $(\hat{f}, \hat{c})$ and test function to be $(f,c)$, where $\hat{c},c$ represent the respective constant terms. Solve variational problem using \texttt{dolfin}:
$$\nabla \hat{f} \cdot \nabla f + \hat{c}f +c\hat{f}dx = \nabla gfdx+(\textbf{n}\cdot g)fds.$$ 
\State \Return $\nabla \hat{f}$, $\hat{f}$.
\end{algorithmic}
\end{algorithm}

For graphing the resulting function $\hat{f}: \mathbb{R}^2 \rightarrow \mathbb{R}$ on irregular domains consisting of the empirical samples from $\nu$, we use python package \texttt{matplotlib.tri}. \texttt{matplotlib.tri} interpolates and generates a 3D surface by using a Delaunay triangulation. The resulting graph is presented in the section below.

\subsubsection{Experiments and Results for 2D}
We implement algorithm 8 on the same example in section 4.2.1 in 2D. That is, $\mu = \mathcal{N}(0,5I)$ and $\nu = \mathcal{N}(0,I)$. We choose a partitioned grid of $50$ for the computation of $\rho$ and the maximum number of evaluations for Bayesian optimization to be $100$. 

\begin{figure}[H]
\begin{subfigure}{.5\textwidth}
  \centering
  \includegraphics[width=1\linewidth]{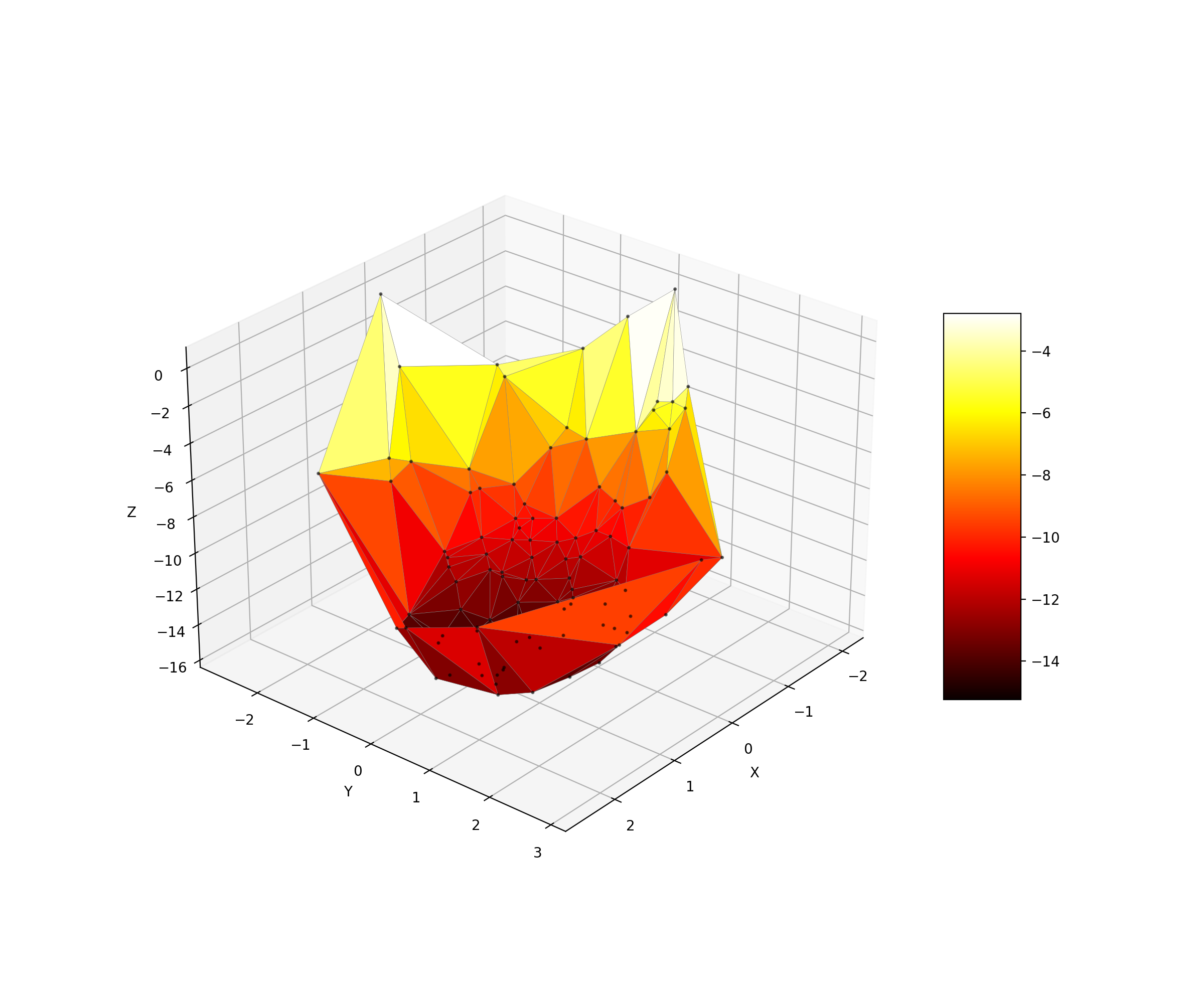}
  \caption{x-y-z view}
  \label{fig:sfig14}
\end{subfigure}%
\begin{subfigure}{.5\textwidth}
  \centering
  \includegraphics[width=1\linewidth]{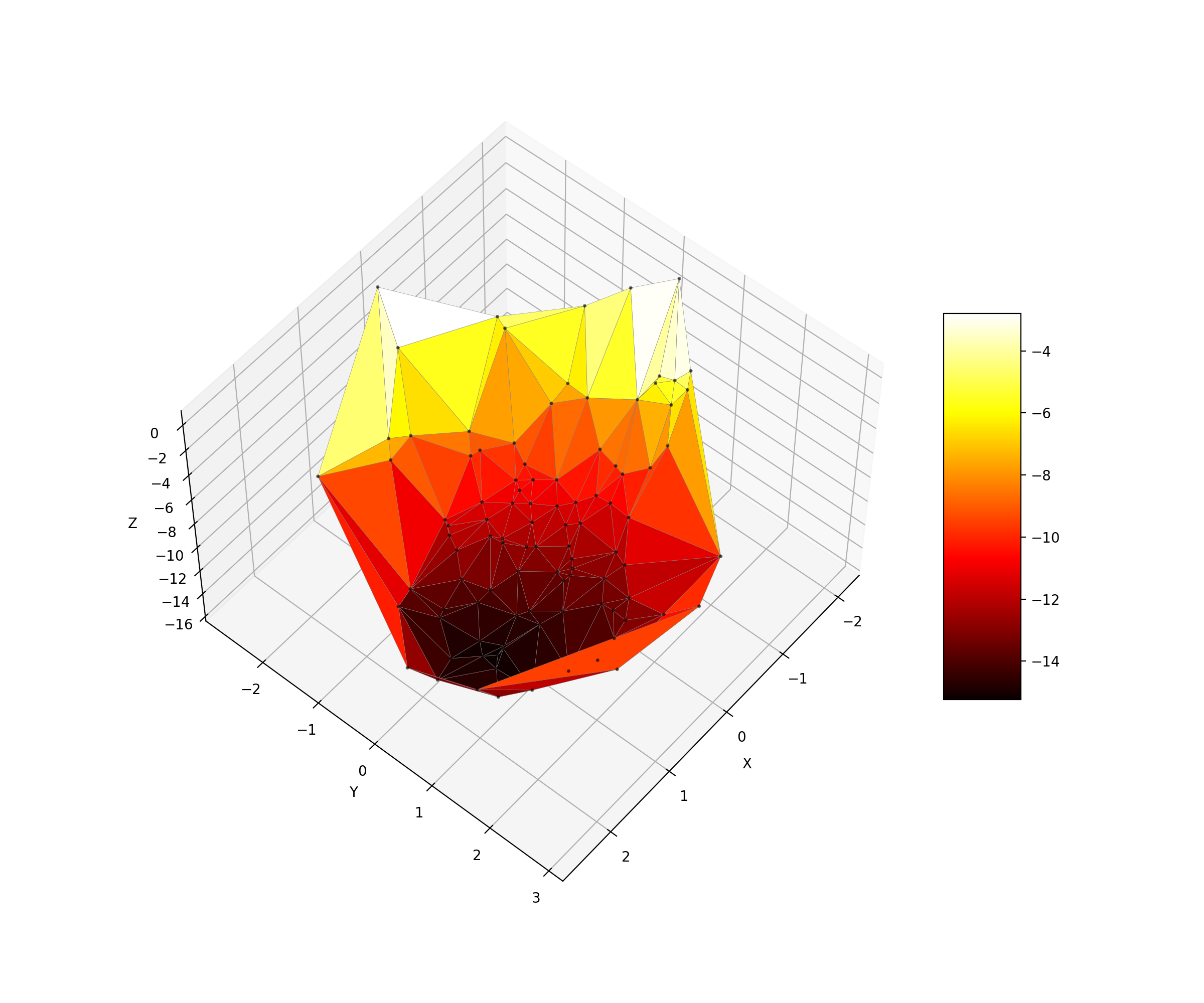}
  \caption{x-y view}
  \label{fig:sfig28}
\end{subfigure}
\begin{subfigure}{.5\textwidth}
  \centering
  \includegraphics[width=1\linewidth]{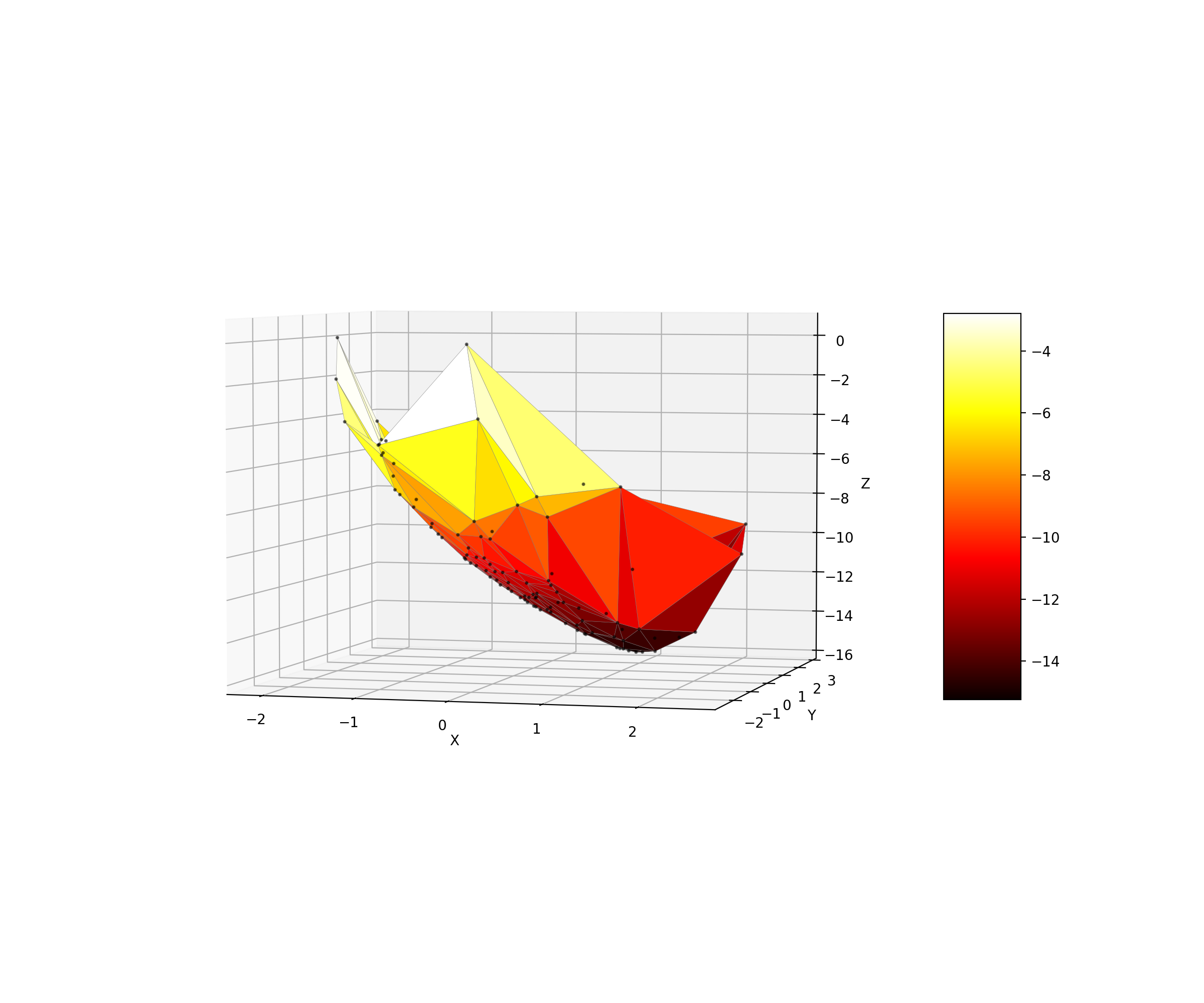}
  \caption{x-z view}
  \label{fig:sfig29}
\end{subfigure}
\begin{subfigure}{.5\textwidth}
  \centering
  \includegraphics[width=1\linewidth]{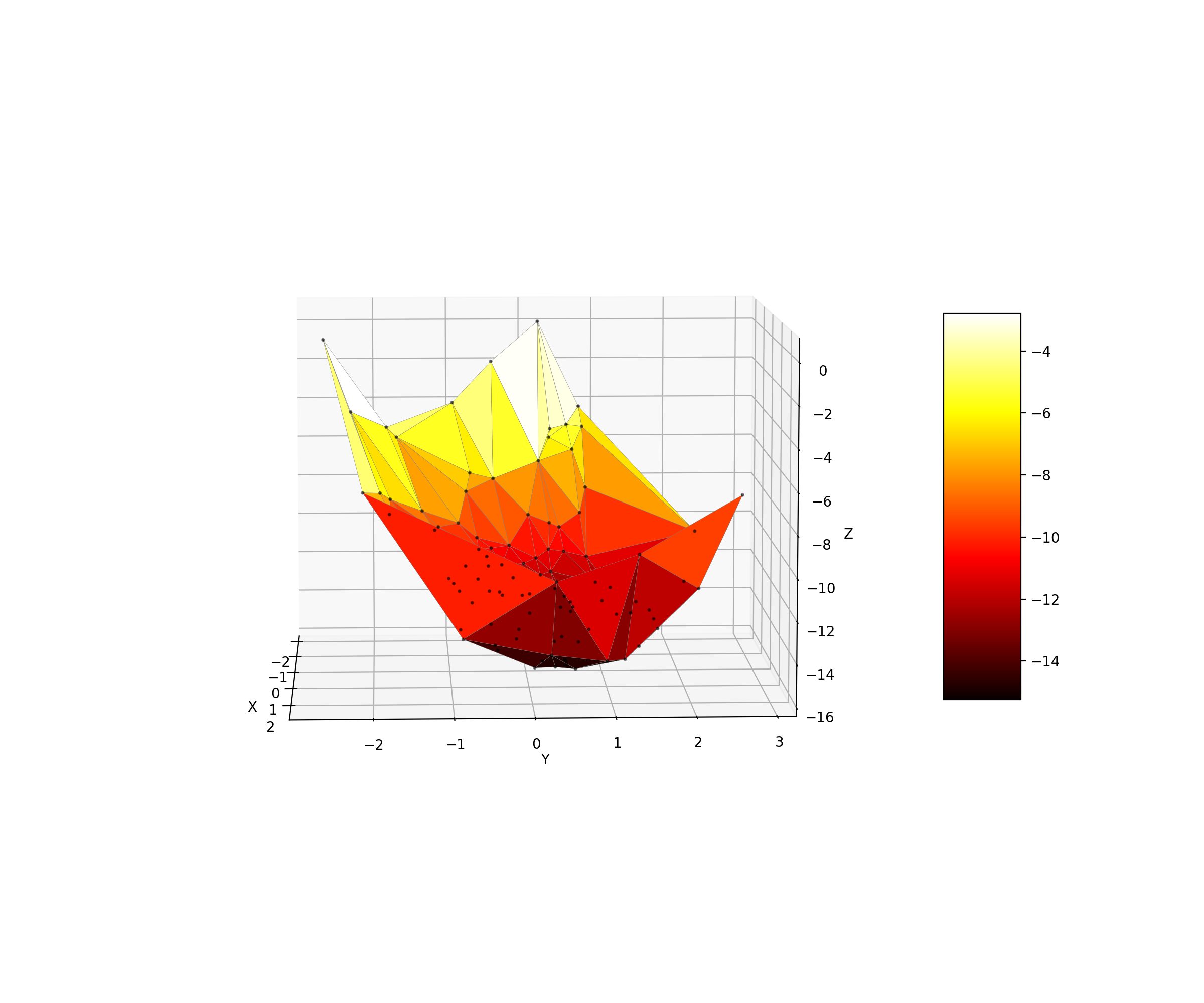}
  \caption{y-z view}
  \label{fig:sfig23}
\end{subfigure}
\caption{Graph of $\hat{f}$ for example 4.2.2}
\label{fig:fig3}
\end{figure}

As expected, the resulting $\hat{f}$ is convex and has the form of a parabola, analogous to the example in 1D.

\section{Model-free arbitrage strategy via \texorpdfstring{$\hat{f}$}{f}}

Recall the market setup established in Section 2 and consider the $d$ financial assets with price processes $(S_t)_{t \geq 0} = (S^1_t,...,S^d_t)_{t \geq 0}$. Fix two maturities $T_1<T_2$. Assume that call options with these maturities are traded at all strikes, then by the Breeden-Litzenberger formula (introduced in Section 1.2), the prices of the call options determine the distribution of $S_{T_1}$ and $S_{T_2}$ under any equivalent martingale measures. Let us denote the distributions of $S_{T_1}$ and $S_{T_2}$ by $\mu$ and $\nu$ respectively. We then construct model-free arbitrage strategy using the convex function recovered in Section 3 between $\mu$ and $\nu$, where \textit{model-independent arbitrage} is understood w.r.t semi-static trading strategies. To establish the result, we recall a few definitions.

\begin{definition}[\textit{Martingale coupling}]
Let $\mathcal{M}(\mu,\nu)$ denote the set of all martingale laws $\pi$ with marginals $\mu,\nu$. That is, probability measure $\pi$ such that $$x \sim \mu, \ \ y \sim \nu, \ \ \mathbb{E}^\pi[y|x] = x.$$
We call $\mathcal{M}(\mu,\nu)$ the set of all martingale couplings between $\mu,\nu$.
\end{definition}

\begin{definition}[\textit{Semi-static trading strategy}]
 A trading strategy $\Delta = (\Delta_t)_{t=0}^{T-1}$ is a semi-static strategy if it consists of a static portfolio in finitely many options whose prices are known at time zero, and a dynamic, self-financing strategy in the underlying assets.
\end{definition}

 As mentioned above, we define model-independent arbitrage w.r.t semi-static trading strategies. We follow definition 2.1 from Davis and Hobson [2007] \cite{davis_hobson_2007}.

\begin{definition}[\textit{Model-independent arbitrage}]
A model-independent arbitrage exists when there exists a semi-static portfolio with zero initial value and with strictly positive value at the terminal date across all possible scenarios. 
\end{definition}

Our set up for model-independent arbitrage is motivated by the above definitions. Suppose that trading is only allowed at $0,T_1,T_2$, we give the following adapted version of model independent arbitrage. This is stated in Definition 5.1 by Wiesel and Zhang [2023] \cite{wiesel2023optimal}.

\begin{definition}[\textit{Adapted model-independent arbitrage}]
Consider trading strategy $\Delta$. The triple of Borel measurable functions $(u_1,u_2,\Delta)$ forms a model-independent arbitrage if $u_1 \in L^1(\mu)$ and $u_2 \in L^1(\nu)$ such that $$u_1(x) - \int u_1d\mu +u_2(y)-\int u_2d\nu + \Delta(x)(y-x) >0, \ \ \forall (x,y) \in \mathbb{R}^d \times \mathbb{R}^d.$$
If no such strategies exist, then we call the market free of model-independent arbitrage.
\end{definition}

Building on Theorem 3.4 from Guyon et al. [2017] \cite{guyon2017bounds}, we present the following theorem connecting model-independent arbitages and convex order between $\mu,\nu$. This is stated in Theorem 5.2 and its proof in Wiesel and Zhang [2023] \cite{wiesel2023optimal}.

\begin{theorem}
The following are equivalent:
\begin{enumerate}
\item The market is free of model-independent arbitrage.
\item $\mathcal{M}(\mu,\nu) \neq \emptyset$.
\item $\mu \preceq_c \nu$.
\end{enumerate}
In particular, if $\mu \preceq_c \nu$, then there exists a convex function $f$, such that the triple $(-f(x),f(y),g(x))$ is a model-independent arbitrage. Here $g$ is a measurable selector of the subdifferential of $f$.
\end{theorem}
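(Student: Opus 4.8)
The plan is to establish the chain of equivalences by the route $(1) \Leftrightarrow (3)$ via the optimal-transport characterisation, $(3) \Leftrightarrow (2)$ by Strassen's theorem, and to extract the explicit arbitrage triple from the convex function witnessing the failure of convex order. The logical skeleton is: if the market is \emph{not} free of model-independent arbitrage, produce an arbitrage triple directly; conversely if $\mu \not\preceq_c \nu$, use a separating convex function to build one.

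First I would prove $(3) \Rightarrow (1)$. Assume $\mu \preceq_c \nu$. By Strassen's theorem $\mathcal{M}(\mu,\nu) \neq \emptyset$; pick $\pi \in \mathcal{M}(\mu,\nu)$. Suppose for contradiction that $(u_1,u_2,\Delta)$ is a model-independent arbitrage, so the strict inequality in Definition 5.1.5 holds pointwise. Integrate both sides against $\pi$. The terms $\int u_1\,d\mu - \int u_1\,d\mu$ and $\int u_2\,d\nu - \int u_2\,d\nu$ cancel (using that $\pi$ has the correct marginals and $u_1 \in L^1(\mu)$, $u_2 \in L^1(\nu)$), and $\int \Delta(x)(y-x)\,\pi(dx,dy) = \int \Delta(x)\,\mathbb{E}^\pi[y-x \mid x]\,\mu(dx) = 0$ by the martingale property. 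Hence $\int (\text{positive quantity})\,d\pi = 0$, a contradiction. This gives $(2) \Rightarrow (1)$ as well.

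Next, $(1) \Rightarrow (3)$, equivalently $\neg(3) \Rightarrow \neg(1)$. Suppose $\mu \not\preceq_c \nu$. Then there is a convex $f : \mathbb{R}^d \to \mathbb{R}$ with $\int f\,d\mu > \int f\,d\nu$; by truncating/approximating we may take $f$ with at most linear growth so that $f \in L^1(\mu) \cap L^1(\nu)$ and $f$ admits a measurable selector $g$ of its subdifferential with $g(x)\cdot(y-x) \leq f(y) - f(x)$ for all $x,y$ (the subgradient inequality). Set $u_1 = -f$, $u_2 = f$, $\Delta = g$. Then
\begin{align*}
u_1(x) - \int u_1\,d\mu + u_2(y) - \int u_2\,d\nu + \Delta(x)(y-x)
&= -f(x) + f(y) + g(x)(y-x) + \int f\,d\mu - \int f\,d\nu \\
&\geq 0 + \left(\int f\,d\mu - \int f\,d\nu\right) > 0,
\end{align*}
so this triple is a model-independent arbitrage and the market is not arbitrage-free; this simultaneously proves the final claim about the form of the arbitrage. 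The equivalence $(2) \Leftrightarrow (3)$ is exactly Strassen's theorem, quoted in the introduction.

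The main obstacle is the growth/integrability bookkeeping in the $(1) \Rightarrow (3)$ direction: a general separating convex function need not be integrable against $\mu$ and $\nu$, nor need its subgradients be bounded, so one must argue that whenever $\mu \not\preceq_c \nu$ the separation can be realised by an $f$ of at most linear growth (e.g.\ by a cutoff argument, using that $\mu,\nu$ have finite first moments and that the set of test functions $\{(|x|-k)^+ e \cdot x / |x|\}$ or truncated cones already detect any violation of convex order). A secondary technical point is measurable selection of the subdifferential — standard, since $\partial f$ is a maximal monotone set-valued map with closed convex values, so a Borel selector exists. Once these are in place the computations are the two displays above.
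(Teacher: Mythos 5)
Your proposal follows essentially the same route as the paper: $(2)\Leftrightarrow(3)$ is Strassen's theorem, $(2)\Rightarrow(1)$ follows by integrating the arbitrage inequality against any martingale coupling so that the $\Delta(x)(y-x)$ term vanishes, and $\neg(3)\Rightarrow\neg(1)$ is obtained by building the triple from a convex function separating $\mu$ and $\nu$ together with the subgradient inequality. Your attention to the integrability of the separating function and to measurable selection of the subdifferential is a genuine improvement on the paper's sketch, which passes over both points.

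There is, however, one sign that does not work in your final display. With $g(x)\in\partial f(x)$ the subgradient inequality reads $f(y)-f(x)\geq g(x)\cdot(y-x)$, which bounds $g(x)\cdot(y-x)$ from \emph{above}; it does not imply $-f(x)+f(y)+g(x)\cdot(y-x)\geq 0$. Indeed for $f(x)=x^2$, $g(x)=2x$, $x=1$, $y=0$ this quantity equals $-3$. The dynamic position must be $\Delta=-g$, so that the pointwise lower bound becomes $f(y)-f(x)-g(x)\cdot(y-x)+\int f\,d\mu-\int f\,d\nu\geq\int f\,d\mu-\int f\,d\nu>0$, and the arbitrage triple is $\bigl(-f(x),f(y),-g(x)\bigr)$. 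The same sign slip is present in the theorem statement and in the paper's own proof (which moreover writes $g(y)$ where $f(y)$ is meant), so your computation faithfully reproduces an error of the source rather than introducing a new idea gap; once $\Delta=-g$ is substituted, your argument is complete and coincides with the paper's.
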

We give a proof of Theorem 5.0.1.
\begin{proof}[\textit{Proof of Theorem 5.0.1}]
(ii) $\leftrightarrow$ (iii) is stated in Strassen's theorem. If $\mu \not \preceq_c \nu$, then by definition of convex order there exists a convex function $f$ such that $$\int f d\mu > \int f d\nu.$$ On the other hand, $f$ is convex and thus satisfies $$f(y)-f(x) \geq g(x)(y-x) \ \ \forall (x,y) \in \mathbb{R}^d \times \mathbb{R}^d.$$ Combining the two equations above shows that $(-f(x),g(y),g(x))$ is a model-independent arbitrage and thus (i) $\rightarrow$ (ii). It remains to show that (ii) $\rightarrow$ (i). Taking expectations in the inequality $$u_1(x)-\int u_1d\mu +u_2(y)-\int u_2 d\nu + \Delta(x)(y-x) >0, \ \ \forall(x,y) \in \mathbb{R}^d \times \mathbb{R}^d$$ under any martingale measure with marginals $\mu,\nu$ leads to a contradiction. This concludes the proof.
\end{proof}

In particular, a model-independent arbitrage strategy is given by the triple $(-\hat{f}(x), \hat{f}(y), \nabla \hat{f}(x)).$ This is commonly known as the calendar spread, which is a trading strategy that involves buying and selling options or futures contracts with different expiration dates but the same underlying asset and strike price. In particular, once $\rho$ is found numerically (see section 3), we need only to estimate $\nabla f(x)$ from the optimizing transport plan of $C(\rho,\nu)$ in order to obtain an explicitly arbitrage strategy (see section 4).

\section{Arbitrage between Portfolios}
Another exploitation of the absence of convex order in the market for arbitrage opportunities can be found in the arbitrage between two Functionally Generated Portfolios. In this section, we introduce Functionally Generated Portfolio in Stochastic Portfolio Theory and propose a tentative link between convex order and Functionally Generated Portfolios. Specifically, we propose that the ``convex order" at play in the arbitrage strategy is a different measurement of the volatility of market, captured by the constructed gamma process in Fernholz, Karatzas et al. [2017] \cite{fernholz2016volatility}. We begin by introducing the Functionally Generated Portfolio in Stochastic Portfolio Theory and setting up relevant definitions. Note that the market setup and notation can be different from the previous sections.

\subsection{Functionally generated portfolio}
Functionally Generated Portfolio is the method of constructing trading strategies through functional generation. This is an important construction in Stochastic Portfolio Theory established by Robert Fernholz. Since then, Karatzas and Karadaras (see \textit{Portfolio theory and Arbitrage} [2021] \cite{karatzas2012portfolio}) and Karatzas and Ruf [2017] \cite{karatzas2016trading} etc. further developed the theory systematically and simplified the Functionally Generated Portfolio method by proposing additive and multiplicative functional generation methods. We begin by setting up the relevant market model considered in the Functionally Generated Portfolio framework by defining key concepts.

\subsubsection{Market model}
On a given probability space ($\Omega, \mathcal{F}, \mathbb{P}$), endowed with a right continuous filtration $(\mathcal{F})_{t \geq 0}$ that satisfies $\mathcal{F}(0) = \{\emptyset, \Omega\}$, we consider a stochastic process $S = (S_1,...,S_d)$ of continuous non-negative semi-martingales with strictly positive initial values for each component. Now consider a frictionless equity market composed of fixed $d$ assets and a money market. Let the vector process $S_i$ represents the price of asset $i$ where $i = 1,2,...,d$. Define the process 
$$\Sigma(t) := S_1(t)+...+S_d(t), \ \ t \geq 0.$$
Then this process is interpreted as the market capitalizations of a fixed number $d \geq 2$ of companies in the equity market. We further impose the constraint that the total capitalization $\Sigma$ of the equity market satisfies
$$\mathbb{P}[\Sigma(t) > 0, \ \ \forall t \geq 0] = 1.$$ Now define the vector process $\pi = (\pi_1,...,\pi_d)^T$ that consists of the respective market one $d$ companies' relative market weight processes
$$\pi_i(t) := \frac{S_i(t)}{\Sigma(t)} = \frac{S_i(t)}{S_1(t)+...+S_d(t)}, \ \ t\geq 0, i = 1,...,d.$$ Then each components of $\pi$ are continuous, nonnegative semimartingales, each of them takes values in the unit interval $[0,1]$, and they satisfy $\pi_1+...+\pi_d = 1$. More fromally, the continuous $d-$dimensional semi-martingale $\pi(t) = (\pi_1(t),...\pi_d(t))$ takes values in the lateral face $$\Delta^d :=\{(x_1,...,x_d) \in [0,1]^d : \sum_{i=1}^d x_i = 1\} \subset \mathbb{H}^d$$ of the unit simplex, where $\mathbb{H}^d$ denotes the hyperplane $$\mathbb{H}^d:= \{(x_1,...,x_d) \in \mathbb{R}^d:  \sum_{i=1}^d x_i = 1\}.$$ We assume that the initial weights $\pi^i(0)$ for $i \in \{1,..,d\}$ are non-negative.

\subsubsection{Trading strategies}
We consider predictable processes $\psi = (\psi_1,...,\psi_d)$ and $\phi = (\phi_1,...,\phi_d)$ with values in $\mathbb{R}^d$. We define $\psi_i$ and $\phi_i$ as the number of shares of stocks of company $i = 1,...,d$ at time $t\geq0$. Then the associated wealth processes of the investment under strategy $\psi$ and $\phi$ are
$$V^\psi(\cdot ; S) : = \sum_{i = 1}^d \psi_iS_i$$
and 
$$V^\phi(\cdot ; S) : = \sum_{i = 1}^d \phi_iS_i.$$ We now give the definition of trading strategies in our setting.

\begin{definition}[\textit{Trading strategies}]
Define $\mathcal{L}(\cdot)$ to be the set of admissible trading strategies (or portfolios) of the equity market $S$ (or market weights $\pi$). Then for $\psi$ and $\phi$ to be admissible trading strategies with respect to equity market $S$ if they are ``self-financed", i.e.
$$V^{\psi}(\cdot; S) - V^\psi(0;S) = \int_{0}^{\cdot} \sum_{i = 1}^d \psi_i(t)dS_i(t).$$
$$V^{\phi}(\cdot; S) - V^\phi(0;S) = \int_{0}^{\cdot} \sum_{i = 1}^d \phi_i(t)dS_i(t).$$
\end{definition}

In this case, we write $\psi, \phi \in \mathcal{L}(S)$. Moreover, it is important to note that by Proposition 2.3 in Karatzas and Ruf [2017] \cite{karatzas2016trading}, $\mathcal{L}(S) = \mathcal{L}(\pi)$, i.e. a $\mathbb{R}^d$-valued previsible process $\psi$ is a trading strategy with respect to the semimartingale $S$ iff it is a trading strategy with respect to the market weight semimartingale $\pi$ as defined previously. We write in this case
$$V^\psi (\cdot; S) = \Sigma(\cdot) V^\psi(\cdot; \pi)$$
and 
$$V^\phi (\cdot; S) = \Sigma(\cdot) V^\phi(\cdot; \pi).$$

\subsubsection{Proportional Investment}
By our setup, for an admissible trading strategy $v = (v_1,...,v_d) \in \mathcal{L}(S)$, the associated wealth process 
$$V^v (\cdot; S) = V^v(0;S)+\int_{0}^{\cdot} \sum_{i = 1}^d v_i(t)dS_i(t)$$
is a numeraire by definition. Since $V^\psi (\cdot; S), V^\phi (\cdot; S) >0$, we can define new predictable, vector-valued processes $\mu = (\mu_1,...,\mu_d)^T \in [0,1]^d$ and $\nu = (\nu_1,...,\nu_d)^T \in [0,1]^d$ with respect to $\psi,\phi$ in the following manner:
$$\mu_i(t) := \frac{S_i(t)\psi_i(t)}{V^\psi(t;S)}$$
$$\nu_i(t) := \frac{S_i(t)\phi_i(t)}{V^\phi(t;S)}$$
for each $i \in \{1,...,d\}$.We interpret $\mu$ and $\nu$ as two trading portfolios in the $d-$asset market, where $\mu_i$ and $\nu_i$ each represents the proportion of current wealth at time $t\geq0$ invested in the stock of company $i = 1,...,d$.

\subsubsection{Additively generated portfolios}
Fernholtz, Karatzas, and Ruf [2018] \cite{fernholz2016volatility} establish a special class of trading strategies for which the representation in 6.1.2 takes a very simple form. These trading strategies are classified in two groups: the additive functional generation and the multiplicative functional generation. In this paper, we concern ourselves with only the additive Functionally Generated Portfolios. To introduce this concept, we begin with a definition of regular function following Definition 3.1 in Karatzas and Ruf [2017] \cite{karatzas2016trading}.
\begin{definition}[\textit{Regular function}]
A continuous mapping $G: \Delta^d \rightarrow \mathbb{R}$ is called a regular function with respect to some vector process $(\pi(t))_{t\geq0}$ of relative market weights, if the process $G(\pi(t))$ is a semimartingale of the form 
$$G(\pi(t)) = G(\pi(0))+\int_0^t \sum_{i=1}^d D_iG(\pi(s))d\pi_i(s) -\Gamma^G(t)$$ for some measurable function $DG = (D_1G,...,D_dG): \Delta^d \rightarrow \mathbb{R}^d$ and a continuous, adapted process $(\Gamma^G(t))_{t\geq0}$ of finite variation on compact time intervals. If the continuous function $G$ can be extended to a twice continuously differentiable function, elementary stochastic calculus then expresses the gamma process as: $$\Gamma^G(t) = \frac{1}{2}\sum_{i=1}^d \sum_{j=1}^d \int_0^t D_{ij}^2G(\pi(s))d\langle \pi_i,\pi_j\rangle (s)$$ using the notation $$D_iG = \frac{\partial G}{\partial x_i}, \ \  D^2_{ij} = \frac{\partial^2 G}{\partial x_i\partial x_j}.$$ If the function $G$ is concave, then the process $\Gamma^G(t)$ is non-decreasing.

\end{definition}

\begin{definition}[\textit{Lyapunov Function}]
A regular function $G$ is a Lyapunov function if the process $\Gamma^G(\cdot)$ in Definition 6.1.2 is nondecreasing.
\end{definition}

Whereas Fernholtz, Karatzas, and Ruf [2018] developed both the notion of the additively generated portfolios and that of the multiplicatively generated portfolios, in this paper we restrict ourselves to the discussion of only the additivley generated portfolios. We follow Definition 4.1 in Karatzas and Ruf [2017] \cite{karatzas2016trading}
\begin{definition}[\textit{Additively generated portfolios}]
For any given regular function $G: \Delta^d \rightarrow \mathbb{R}$ w.r.t. the vector process $(\pi(t))_{t \geq 0}$ of market weights, we consider $d$-dimensional vector $(\phi^G(t))_{t\geq 0} = (\phi^G_1(t),...,\phi^G_d(t))_{t\geq 0}$ with components $$\phi_i^G(t) := D_iG(\pi(t))+\Gamma^G(t)+G(\pi(t))-\sum_{j=1}^d \pi_j(t)D_jG(\pi(t)), \ \ \text{for }i = 1,...,d.$$ This process is said to be a portfolio \textit{additively generated by $G$}. 

By Proposition 4.3 of Karatzas and Ruf [2017] \cite{karatzas2016trading}, the additively generated portfolio $(\pi(t))_{t \geq 0}$ from Definition 6.1.2 is a trading strategy in the sense of Definition 6.1.1  and has the below \textit{relative value process}: $$V^\phi(t) = G(\pi(t))+\Gamma^G(t).$$

\end{definition}

\subsection{Functionally generated arbitrage}
For previous works on functionally generated portfolios (see Karatzas and Ruf [2017] \cite{karatzas2016trading} and Fernholz, Karatzas, and Ruf [2018] \cite{fernholz2016volatility}), arbitrage and trading strategies of functionally generated portfolios are studied with respect to the benchmark market portfolio. In this section, we consider a generalization of the theory by considering arbitrage of a functionally generated portfolio with respect to another functionally generated portfolio. 

\subsubsection{Relative arbitrage}
We begin by defining relative arbitrage of a functionally generated portfolio (or its respective trading strategy) with respect to another portfolio. For this, we adapt and extend Definition 4.1 in Fernholtz et al. [2018] \cite{fernholz2016volatility}.

\begin{definition}[\textit{Relative Arbitrage}]
Let us fix a real number $T>0$ and a frictionless equity market $S$ with $d$ stocks. Let the process $\pi$ represent the relative market weight process. Given trading strategies $\psi,\phi \in \mathcal{L}(S)$, we define their respective trading portfolios $\mu := (\mu_1,...,\mu_d)$ and $\nu := (\nu_1,...,\nu_d)$ by
$$\mu_i(t) := \frac{S_i(t)\psi_i(t)}{V^{\psi}(t;S)}$$
$$\nu_i(t) := \frac{S_i(t)\phi_i(t)}{V^{\phi}(t;S)}$$
for $i = 1,2,...,d$ (see Section 6.1.3). Then portfolio $\mu$ is an arbitrage relative to portfolio $\nu$ over the time horizon $[0,T]$ if we have 
$$V^{\psi}(t;\pi),V^{\phi}(t;\pi) \geq 0, \ \ \forall t \in [0,T]; \ \ \ \ \ V^\psi(0;\pi) = V^{\phi}(0;\pi) = 1$$
along with
$$\mathbb{P}[V^\psi(T;\pi)-V^\phi(T;\pi) \geq 0] =1$$
and
$$\mathbb{P}[V^\psi(T;\pi)-V^\phi(T;\pi) > 0] > 0.$$
Moreover, whenever a given portfolio $\mu$ satisfies these conditions, and if the last probability is not just positive but actually equal to 1, that is, if 
$$\mathbb{P}[V^\psi(T;\pi)-V^\phi(T;\pi) > 0] =1,$$
we say that the portfolio $\mu$ generated by $\psi$ is a strong arbitrage relative to the portfolio $\nu$ generated by $\phi$ over the time horizon $[0,T]$. Equivalently, in terms of trading strategies, 
we say that trading strategy $\psi$ is a strong arbitrage relative to the trading strategy $\phi$ over the time horizon $[0,T]$. 

\end{definition}

\subsubsection{Relative arbitrage between additively generated portfolios}
In this section, we examine relative arbitrage between two additively generated portfolios. This is a generalization of the result in Theorem 4.3 by Karatzas and Ruf [2017] \cite{karatzas2016trading}. We leave the analysis of multiplicatively generated portfolio to future work.

\begin{proposition}
Let continuous functions $G_1,G_2 : \Delta^d \rightarrow [0,\infty)$ be Lyapunov functions with $G_1(\pi(0)), G_2(\pi(0))>0$. Further impose $G_1,G_2$ to be non-negative to ensure non-negative wealth processes generated by the trading strategies derived from $G_1,G_2$. Assume that $G_2(\pi(t))$ is bounded above by some constant $C$ for all $t \geq 0$. Let $\Gamma^{G_k}$ denote the Gamma process generated by functions $G_k$ for $k = 1,2$. If (continuous, adapted) process
$$\kappa (\cdot) := \Gamma^{G_1}(\cdot) - \Gamma^{G_2}(\cdot)$$ 
is such that
$$\kappa(t) > \eta t $$ for some constant $\eta >0$, then there exists a time horizon over which the trading strategy $\psi = (\psi_1,...,\psi_d)$ additively generated by $G_1$ is a strong arbitrage relative to the trading strategy $\phi = (\phi_1,...,\phi_d)$ additively generated by $G_2$. 
\end{proposition}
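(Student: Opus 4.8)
The plan is to read the two portfolios' relative value processes off from the additive-generation theory and then compare them pathwise. First I would invoke Proposition~4.3 of Karatzas and Ruf \cite{karatzas2016trading} (recalled in the definition of additively generated portfolios): the portfolios $\psi$ and $\phi$ additively generated by $G_1$ and $G_2$ are trading strategies in $\mathcal{L}(S)=\mathcal{L}(\pi)$, and their relative value processes are
\[
  V^\psi(\cdot;\pi) = G_1(\pi(\cdot)) + \Gamma^{G_1}(\cdot), \qquad V^\phi(\cdot;\pi) = G_2(\pi(\cdot)) + \Gamma^{G_2}(\cdot).
\]
Since $G_1,G_2\ge 0$ and, both being Lyapunov functions, $\Gamma^{G_1},\Gamma^{G_2}$ are non-decreasing with $\Gamma^{G_k}(0)=0$, both relative value processes are non-negative on $[0,\infty)$, with $V^\psi(0;\pi)=G_1(\pi(0))>0$ and $V^\phi(0;\pi)=G_2(\pi(0))>0$. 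Rescaling $\psi$ by $1/G_1(\pi(0))$ and $\phi$ by $1/G_2(\pi(0))$ --- equivalently replacing $G_1,G_2$ by $G_1/G_1(\pi(0))$, $G_2/G_2(\pi(0))$, which are again Lyapunov functions --- normalises $V^\psi(0;\pi)=V^\phi(0;\pi)=1$, as required by the definition of relative arbitrage.

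The heart of the argument is then a single pathwise estimate. For any $T>0$,
\[
  V^\psi(T;\pi) - V^\phi(T;\pi) = \bigl(G_1(\pi(T)) - G_2(\pi(T))\bigr) + \bigl(\Gamma^{G_1}(T) - \Gamma^{G_2}(T)\bigr) \ge -C + \kappa(T) > \eta T - C ,
\]
where the first inequality uses $G_1\ge 0$ together with $G_2(\pi(T))\le C$, and the second uses the hypothesis $\kappa(t)>\eta t$. Hence for any $T^\ast > C/\eta$ we get $V^\psi(T^\ast;\pi) - V^\phi(T^\ast;\pi) > 0$ on \emph{every} path. Combined with the non-negativity of both value processes on $[0,T^\ast]$ and the initial normalisation, this verifies every clause of the definition of relative arbitrage; and because the strict inequality holds surely, $\mathbb{P}[V^\psi(T^\ast;\pi)-V^\phi(T^\ast;\pi)>0]=1$ comes for free, so $\psi$ is in fact a \emph{strong} relative arbitrage with respect to $\phi$ over $[0,T^\ast]$.

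I expect the only genuinely delicate point to be the normalisation step, since rescaling $G_1$ and $G_2$ by the different constants $1/G_1(\pi(0))$ and $1/G_2(\pi(0))$ also rescales the two gamma processes, and one must check that the difference of the rescaled processes still dominates a strictly positive multiple of $t$; this is immediate when $G_1(\pi(0))\le G_2(\pi(0))$, and in the remaining case one either imposes $G_1(\pi(0))=G_2(\pi(0))$ (under which the threshold $T^\ast>C/\eta$ survives verbatim) or absorbs it into a small additional estimate. Everything else is bookkeeping: invoking $V=G+\Gamma$, using $\Gamma^{G_k}(0)=0$ and monotonicity for non-negativity, and observing that the upper bound $C$ on $G_2$ is exactly what prevents the linear-in-$t$ growth of $\kappa$ from being cancelled at the terminal time --- an unbounded $G_2$ could in principle neutralise it.
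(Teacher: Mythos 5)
Your proof is correct and follows essentially the same route as the paper's: invoke Proposition~4.3 of Karatzas and Ruf to write $V^{\psi}=G_1(\pi(\cdot))+\Gamma^{G_1}$ and $V^{\phi}=G_2(\pi(\cdot))+\Gamma^{G_2}$, drop $G_1\ge 0$, bound $G_2\le C$, and pick $T^{\ast}>C/\eta$ so that the linear growth of $\kappa$ dominates. The only difference is that you explicitly address the initial normalisation $V^{\psi}(0;\pi)=V^{\phi}(0;\pi)=1$ required by the definition of relative arbitrage, a point the paper's proof passes over in silence, so if anything your write-up is slightly more careful.
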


\begin{proof}
The following proof simplifies the notation for relative value process notation $V(\cdot;\pi)$ to $V(\cdot)$. Let trading strategies $\psi,\phi$ be additively generated with respect to functions $G_1,G_2$. By Proposition 4.3 in ``Lyapunov functions", 
$$V^\psi (t) = G_1(\pi(t))+\Gamma^{G_1}(t)$$
$$V^\phi (t) = G_2(\pi(t))+\Gamma^{G_2}(t)$$
for $t\geq 0$.
It follows that given a fixed $T>0$, 
\begin{equation}
\begin{split}
V^\psi (T)-V^\phi (T) & = (\Gamma^{G_1}(T)-\Gamma^{G_2}(T))+(G_1(\pi(T))-G_2(\pi(T)))\\
 & \geq (\Gamma^{G_1}(T)-\Gamma^{G_2}(T))-G_2(\pi(T))
\end{split}
\end{equation}
Here we have used the bound $G_1 \geq 0$. From this, we can derive a stronger condition for strong relative arbitrage. Specifically, if
$$\mathbb{P}[\Gamma^{G_1}(T)-\Gamma^{G_2}(T) > G_2(\pi(T))] = 1$$
then
$$\mathbb{P}[V^\psi (T)-V^\phi (T) > 0] = 1$$
holds almost surely. Recall that $\kappa(t) = \Gamma^{G_1}(t)-\Gamma^{G_2}(t) > \eta t$ for some $\eta >0$ and $\forall t \geq 0$. Choose $T^*>0$ to be such that 
$$T^* > \frac{C}{\eta},$$
it follows that
$$\Gamma^{G_1}(T^*)-\Gamma^{G_2}(T^*) > \eta T^* > C \geq G(\pi(T^*)).$$
Then for all $T \geq T^*$, we have that
$$\Gamma^{G_1}(T)-\Gamma^{G_2}(T)> \eta T > \eta T^* > C \geq G(\pi(T)).$$
Therefore, the additively generated strategy $\psi = (\psi_1,...,\psi_d)^T$ is a strong arbitrage relative to the additively generated strategy $\phi = (\phi_1,...,\phi_d)^T$ over every time horizon $[0,T]$ with $T \geq T^*$. 
\end{proof}

\begin{remark}
Note that by the definition of regular functions, the trading strategy additively generated by a regular function $G$ need not be unique. This is because $DG$ is not necessarily unique. This may result in different wealth process additively generated by the same regular function $G$. We note that our results in Proposition 1.3 hold in general, without having to assume the uniqueness of additively generated trading strategies nor wealth process. In addition, to impose further that the wealth process additively generated by the same regular function be unique, we invoke Proposition 3.4 from \cite{karatzas2016trading} and observe that if there exists a local martingale deflator for the market weight process $\pi(\cdot)$, then the process $\Gamma^G(\cdot)$ does not depend on the choice of $DG$ and is uniquely determined up to indistinguishability. 

\end{remark}

\subsection{Convex order and the gamma process}
In this section, we present a conjecture on the link between convex order and the gamma process $\Gamma(\cdot)$ in the theory of Functionally Generated Portfolio. 

\begin{conjecture}
    
Consider a generic $d-$dimensional semimartingale of market weights $\pi(\cdot)$ with continuous paths. Let $G_1,G_2 : \Delta^d \rightarrow [0,\infty)$ be a continuous regular function in $\mathcal{C}^2$. For all $t \geq 0$ and for $k = 1,2$, recall that the (continuous, adapted) gamma process functionally generated by $G_1$ and $G_2$ is defined as
$$\Gamma^{G_k}(t) : = G_k(X(0)) - G_k(X(t)) + \int_0^t \sum_{i = 1}^d v_{k_i}(t)dX_i(t)$$
for $k=1,2$, where 
$$v_{k_i}(t) := D_iG_k(X(t)), \ \  i = 1,...,d,t \geq 0$$
is integrable with respect to the $\Delta^d$-valued semimartingale $\pi(\cdot)$. 
\newline
\newline 
Now let $\psi,\phi$ be trading strategies \textbf{additively} generated by functions $G_1$, $G_2$. Fix a $t\geq 0$. If (continuous, adapted) process
$$\kappa (\cdot) := \Gamma^{G_1}(\cdot) - \Gamma^{G_2}(\cdot)$$
satisfies some form of $concavity^*$, then the probability measures $\mu,\nu$ defined by the investment proportion processes, i.e.
$$\mu_i(\cdot) := \frac{S_i(\cdot)\psi_i(\cdot)}{V^{\psi}(\cdot;S)}$$
$$\nu_i(\cdot) := \frac{S_i(\cdot)\phi_i(\cdot)}{V^{\phi}(\cdot;S)}$$
for $i = 1,2,...,d$ have finite second moments and are in a weaker form of convex order, which we call ``quasi-convex order" or the ``$\Gamma$-order".
\end{conjecture}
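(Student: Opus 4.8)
\medskip
\noindent\textbf{Proof proposal.} The statement is not yet a closed assertion: both the hypothesis ``$\kappa(\cdot)$ satisfies some form of $concavity^*$'' and the conclusion ``$\mu,\nu$ lie in quasi-convex order (the $\Gamma$-order)'' contain an undefined notion, and supplying these two definitions so that the implication is simultaneously true, strictly weaker than $\preceq_c$, and still usable in a Theorem~5.0.1-type arbitrage test is the real content of the problem. My proposal is to read ``$concavity^*$'' as pathwise concavity of the (continuous, finite-variation) map $t\mapsto\kappa(t)$; recalling from Definition~6.1.2 that $\Gamma^{G_k}(t)=-\tfrac12\sum_{i,j}\int_0^t D^2_{ij}G_k(\pi(s))\,d\langle\pi_i,\pi_j\rangle(s)$, this says the Hessian difference $D^2(G_1-G_2)$ contributes a nonincreasing density along the realized path of $\pi$, a convenient sufficient condition being that $G_1-G_2$ is concave on $\Delta^d$, so that $\Gamma^{G_1}$ outpaces $\Gamma^{G_2}$. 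For the $\Gamma$-order I propose $\mu\preceq_\Gamma\nu$ to mean $\E[f(\mu(t)-\pi(t))]\le\E[f(\nu(t)-\pi(t))]$ for every \emph{nonnegative} convex $f$ with $f(0)=0$, i.e. convex order of the deviation-from-benchmark vectors tested against gauge-type convex functionals; this is genuinely weaker than $\preceq_c$ (it ignores affine test functions, hence first moments) and measures only how tightly each portfolio hugs the market weight $\pi(t)$. Throughout I would keep the hypotheses of the relative-arbitrage proposition, namely $G_1,G_2$ nonnegative Lyapunov functions, so that the value processes below are strictly positive.

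\medskip
With these definitions the plan has three steps. \textbf{Step 1 (explicit representation).} By Definition~6.1.2 and Proposition~4.3 of \cite{karatzas2016trading}, $V^\psi(t;\pi)=G_1(\pi(t))+\Gamma^{G_1}(t)$ and $V^\phi(t;\pi)=G_2(\pi(t))+\Gamma^{G_2}(t)$; together with $S_i=\Sigma\pi_i$ and $V^\psi(\cdot;S)=\Sigma V^\psi(\cdot;\pi)$ this gives
\[
\mu_i(t)-\pi_i(t)=\frac{\pi_i(t)\bigl(D_iG_1(\pi(t))-\textstyle\sum_j\pi_j(t)D_jG_1(\pi(t))\bigr)}{G_1(\pi(t))+\Gamma^{G_1}(t)},
\]
and the same with $G_2$ for $\nu_i(t)-\pi_i(t)$. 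Hence each deviation vector factorises as a direction $w_k(\pi(t))$ depending only on $\pi(t)$ and $\nabla G_k$, scaled by $1/(G_k(\pi(t))+\Gamma^{G_k}(t))$, and the assumed concavity/growth of $\kappa$ becomes a pathwise comparison of those scalars. \textbf{Step 2 (finite moments).} Since $\mu(t),\nu(t)\in\Delta^d\subset[0,1]^d$ are bounded, both second moments are trivially finite. \textbf{Step 3 (the ordering).} In the model sub-case $w_1(\pi)=c(\pi)w_2(\pi)$ with $c\ge0$ one gets $\mu(t)-\pi(t)=\lambda(t)\bigl(\nu(t)-\pi(t)\bigr)$ with $\lambda(t)=c(\pi(t))(G_2+\Gamma^{G_2})/(G_1+\Gamma^{G_1})$, and $\lambda(t)\in[0,1]$ a.s.\ follows from $c\le1$ together with $\kappa(t)\ge G_2(\pi(t))-G_1(\pi(t))$; then for any nonnegative convex $f$ with $f(0)=0$ one has $f(\mu(t)-\pi(t))=f(\lambda(t)(\nu(t)-\pi(t)))\le\lambda(t)f(\nu(t)-\pi(t))\le f(\nu(t)-\pi(t))$ pointwise, whence $\E[f(\mu(t)-\pi(t))]\le\E[f(\nu(t)-\pi(t))]$, i.e.\ $\mu\preceq_\Gamma\nu$. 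For general $G_1,G_2$ one would instead apply It\^o's formula to $f(\mu(t)-\pi(t))$ and $f(\nu(t)-\pi(t))$, discard the local-martingale parts by localization (using boundedness and $\E[\Gamma^{G_k}(t)]<\infty$), and show the drift difference has the right sign, or adapt the Wiesel--Zhang $W_2$-characterization to the two deviation laws.

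\medskip
The main obstacle is Step~3 outside the proportional sub-case, and it is foundational rather than computational: $(\pi(t),\Gamma^{G_k}(t))\mapsto\mu(t)$ is a nonlinear rational map, so $f(\mu(t))$ has a messy It\^o expansion, and the constraints one must impose on $f$ to sign the drift difference are \emph{themselves} what the definition of $\Gamma$-order should encode --- a chicken-and-egg between choosing the test cone and proving the inequality, which I expect to force several iterations on the definitions before a clean theorem emerges. Compounding this, $\mu(t)$ and $\nu(t)$ are neither a push-forward, a dilation, nor a martingale transform of one another (they are driven by the same $\pi$ through different generators $G_1,G_2$), so no Strassen-type coupling is directly available. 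Finally, for $\Gamma^{G_k}$ --- hence for $\mu,\nu$ --- to be well defined independently of the choice of $DG_k$, the argument will presumably also need a local martingale deflator for $\pi(\cdot)$ (Proposition~3.4 of \cite{karatzas2016trading}), exactly as in the Remark following the relative-arbitrage proposition, and possibly a terminal or stopping time in place of a generic $t$.
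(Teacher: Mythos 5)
The statement you were asked to prove is Conjecture 6.3.1, and the paper itself offers no proof --- only a ``motivation'' consisting of the explicit representation of $\mu_i(t)$ and $\nu_i(t)$ via Proposition 4.3 of Karatzas--Ruf together with an informal discussion of what $concavity^*$ and the $\Gamma$-order ought to mean. Your diagnosis that the statement is not a closed assertion is therefore exactly right, and your Step 1 reproduces the paper's motivating computation: your formula $\mu_i(t)-\pi_i(t)=\pi_i(t)\bigl(D_iG_1(\pi(t))-\sum_j\pi_j(t)D_jG_1(\pi(t))\bigr)/\bigl(G_1(\pi(t))+\Gamma^{G_1}(t)\bigr)$ is precisely the paper's equation (4) rearranged, and Step 2 (boundedness of $\Delta^d$ gives finite second moments) is immediate. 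Where you genuinely diverge is in the formalization. The paper's informal $\Gamma$-order is a condition on the gamma processes themselves --- existence of a dominating constant $D$ (or bounded function) with $\kappa(t)\le D$, read as ``the spread of the first portfolio exceeds that of the second by at most a bounded leeway'' --- and is thus not literally an order on probability measures, so it cannot instantiate the conjecture's conclusion as stated. You instead propose an order on the laws of the deviation vectors $\mu(t)-\pi(t)$ and $\nu(t)-\pi(t)$ tested against nonnegative convex gauges vanishing at the origin, and you prove the implication in the proportional sub-case $w_1=c\,w_2$, $c\le 1$. Your version buys an actual theorem in that sub-case and a test cone compatible with the Wiesel--Zhang $W_2$ machinery; the paper's version stays closer to the arbitrage criterion of Proposition 6.2.1 (where $\kappa(t)>\eta t$ drives strong relative arbitrage) but leaves the order undefined. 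Your account of the obstruction in Step 3 --- no Strassen-type coupling between $\mu(t)$ and $\nu(t)$ since both are driven by the same $\pi$ through different generators, and the circularity between choosing the test cone and signing the It\^o drift --- goes beyond anything the paper records, as does your observation (consistent with Remark 6.2.1) that a local martingale deflator is needed for $\Gamma^{G_k}$, hence for $\mu,\nu$, to be well defined independently of the choice of $DG_k$. One small point in your favour: you write the gamma process as $-\tfrac12\sum_{i,j}\int_0^t D^2_{ij}G_k(\pi(s))\,d\langle\pi_i,\pi_j\rangle(s)$, which is the sign convention under which concave $G$ yields a nondecreasing $\Gamma^G$ as the paper asserts; the paper's displayed formula omits that minus sign.
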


Below we give a motivation for Conjecture 6.3.1.

Let $\psi,\phi$ be trading strategies additively generated by $G_1,G_2$. Then by Proposition 2.2 and Proposition 4.3 in \cite{karatzas2016trading},
$$\mu_i(t) := \frac{S_i(t)\psi_i(t)}{V^{\psi}(t;S)} =\frac{S_i(t)\psi_i(t)}{\Sigma(t)V^{\psi}(t;\pi)} =  \frac{S_i(t)\psi_i(t)}{\Sigma(t)(G_1(\pi(t))+\Gamma^{G_1}(t))}$$
$$\nu_i (t) := \frac{S_i(t)\phi_i(t)}{V^{\phi}(t;S)} =\frac{S_i(t)\phi_i(t)}{\Sigma(t)V^{\phi}(t;\pi)} =  \frac{S_i(t)\phi_i(t)}{\Sigma(t)(G_2(\pi(t))+\Gamma^{G_2}(t))}$$
Again, by Proposition 4.3 of Karatzas and Ruf [2017] \cite{karatzas2016trading} we can write
\begin{equation}
\begin{split}
\psi_i(t) & = V^\psi(t;\pi)+D_iG_1(\pi(t))-\sum_{j=1}^d\pi_j(t)D_jG_1(\pi(t))\\
 & = G_1(\pi(t))+\Gamma^{G_1}(t)+D_iG_1(\pi(t))-\sum_{j=1}^d\pi_j(t)D_jG_1(\pi(t))
\end{split}
\end{equation}
and
\begin{equation}
\begin{split}
\phi_i(t) & = V^\phi(t;\pi)+D_iG_2(\pi(t))-\sum_{j=1}^d\pi_j(t)D_jG_2(\pi(t))\\
 & = G_2(\pi(t))+\Gamma^{G_2}(t)+D_iG_2(\pi(t))-\sum_{j=1}^d\pi_j(t)D_jG_2(\pi(t))
\end{split}
\end{equation}
Then we write
\begin{equation}
\begin{split}
\mu_i(t) & = \pi_i(t) \frac{V^\psi(t;\pi)+D_iG_1(\pi(t))-\sum_{j=1}^d\pi_j(t)D_jG_1(\pi(t))}{V^\psi(t;\pi)} \\
& = \pi_i(t) (1+ \frac{D_iG_1(\pi(t))-\sum_{j=1}^d\pi_j(t)D_jG_1(\pi(t))}{G_1(\pi(t))+\Gamma^{G_1}(t)})
\end{split}
\end{equation}
and so similarly,
\begin{equation}
\begin{split}
\nu_i(t) = \pi_i(t) (1+ \frac{D_iG_2(\pi(t))-\sum_{j=1}^d\pi_j(t)D_jG_2(\pi(t))}{G_2(\pi(t))+\Gamma^{G_2}(t)})
\end{split}
\end{equation}
\newline 
\newline 
since $$\frac{S_i(t)}{\Sigma(t)} = \pi_i(t).$$ 

In the theorem, we suppose process $\kappa(\cdot) = \Gamma^{G_1}(\cdot) -\Gamma^{G_2}(\cdot)$ to follow certain form of concavity. It is not yet clear what this concavity may be. Some ideas for the concavity include that it be defined as a logarithm transform of a random process $X$ that is stochastically concave, i.e. 
$$\E[wX(t_1) + (1-w)X(t_2)] \leq w\E[X(t_1)] + (1-w)\E[X(t_2)]$$ for all $t_1,t_2 \geq 0$ and $w \in [0,1]$. Intuitively, we want a form of bounded concavity in $\kappa(\cdot)$ because we want to existence of a dominating constant $D \in \mathbb{R}$ or another bounded function $y(t)$ such that for all values in $\text{supp}(\kappa(t))$, $\kappa(t) \leq D$ or $\kappa(t) \leq y(t)$. Since the $\Gamma^G$-process is interpreted as the quadratic variation term of its generating function $G$, it is a measure of spread. By saying that the difference process $(\kappa_t)_{t\geq 0}$ between two Gamma processes is bounded above by some constant (or some bounded function) means that the spread of the first portfolio generated by $G_1$ is no bigger than the second one generated by $G_2$ to a certain extent. This is weaker than the strict definition of convex order, since for $\Gamma$-order, it allows a certain amount of ``leeway" up to the dominating constant $D$ (or dominating bounded function $\gamma(t)$). In this sense, we call the form of ``relaxed" convex order the $\Gamma-$order a quasi convex order.

\section{Conclusion}
We have introduced how an arbitrage strategy can be determined through leveraging the absence of convex order in the market. We developed model-independent arbitrage strategy in trading with European options via optimal transport and also offered an overview of relative arbitrage between Functionally Generated Portfolios in Stochastic Portfolio Theory as well as how this could be related to a form of convex order in the market. Future works include formally proving or disproving Conjecture 6.3.1, back-testing model-independent trading strategies in Section 5 with real-world data, as well as developing more efficient numerical methods for the calculation of optimal $\rho$ as well as recovering convex function $\hat{f}$ in multi-dimension. 

\section{Appendix}
\subsection{A. Derivation of the TPE Acquisition Function}
We derive here the EI-based acquisition function for TPE. That is, 
\begin{align*}
EI_y^{*} \propto (\gamma + \frac{g(x)}{l(x)}(1-\gamma))^{-1}.
\end{align*}

\begin{proof}
Recall that the TPE algorithm parametrizes $p(x,y)$ as $p(y)p(x|y).$ Using Bayes' rule and by definition of the EI criterion, we have that $$EI_y^{*}(x) =\int_{-\infty}^{y^*}(y^*-y) \frac{p(x|y)p(y)}{p(x)}dy.$$ Let $\gamma$ be the cut-off quantile. Recall that $p(y < y^*) = \gamma$ and that TPE defines $p(x|y)$ using two densities over the configuration space $\mathcal{X}$: 
$$p(x|y)= 
\begin{cases}
    l(x),& \text{if } y < y^*\\
    g(x),              & \text{otherwise}
\end{cases}$$
Then we can write $$p(x) = \int_{\mathbb{R}}p(x|y)p(y)dy = \gamma y^* l(x)+(1-\gamma)g(x).$$ Therefore, $$\int_{-\infty}^{y^*}(y^*-y)p(x|y)p(y)dy = l(x) \int_{-\infty}^{y^*}(y^*-y)p(y)dy = \gamma y^*l(x)-l(x)\int_{-\infty}^{y^*}p(y)dy.$$ It follows that 
\begin{equation*}
\begin{split}
EI_y^{*} &= \int_{-\infty}^{y^*} (y^*-y) \frac{p(x|y)p(y)}{p(x)}dy \\ &= \int_{-\infty}^{y^*} (y^*-y) \frac{p(x|y)p(y)}{\gamma y^* l(x)+(1-\gamma)g(x)}dy \\ &= \frac{\gamma y^*l(x)-l(x)\int_{-\infty}^{y^*}p(y)dy}{\gamma l(x)+(1-\gamma)g(x)} \\ &\propto (\gamma + \frac{g(x)}{l(x)}(1-\gamma))^{-1}.
\end{split}
\end{equation*}

\end{proof}

\subsection{B. Derivation of the Poisson Equation with Neumann Boundary Conditions}
Assume that the domain $\Omega \subset \mathbb{R}^n$ is bounded and connected. Assume that $\hat{f}: \mathbb{R}^2 \rightarrow \mathbb{R}$ is sufficiently smooth. Given an empirically observed gradient field $g$ in 2-dimensional, we consider the general case where the values of $\hat{f}$ on the boundary are not known. Let $\phi: \mathbb{R}^2 \rightarrow \mathbb{R}$ be an arbitrary continuously differentiable function representing the random noise. We provide a proof for the equivalence of inverse gradient calculation in 2D with the Poisson equation with Neumann boundary conditions. 

\begin{proof}
Consider
\begin{equation*}
\begin{split}
\|\nabla (\hat{f}+ \phi)-g )\|^2 &= \|\nabla \hat{f}+ \nabla \phi (\textbf{x})-g(\textbf{x})\|^2 \\ &= \int_\Omega |\nabla \hat{f}-g(x)|^2d\textbf{x} + 2 \int_{\Omega}(\nabla \hat{f}(\textbf{x})-g(\textbf{x}))\cdot \nabla \phi(\textbf{x})d\textbf{x} \\ &+ \int_\Omega |\nabla \phi(\textbf{x})|^2d\textbf{x}.
\end{split}
\end{equation*}
Since $\arg\min_{\hat{f}}\|\nabla \hat{f}-g\|$ must also minimize $\arg\min_{\hat{f}}\|\nabla (\hat{f}+\phi)-g\|$ (consider taking a constant function $\phi(\textbf{x}) = 0$), this means that the following condition must be satisfied for the optimal $\hat{f}$:
\begin{equation*}
\begin{split}
\int_{\Omega}(\nabla \hat{f}(\textbf{x})-g(\textbf{x}))\cdot \nabla \phi(\textbf{x})d\textbf{x} = 0
\end{split}
\end{equation*}
Decomposing a line integral from the LHS, we have
\begin{equation*}
\begin{split}
0 &= \int_{\Omega}(\nabla \hat{f}(\textbf{x})-g(\textbf{x}))\cdot \nabla \phi(\textbf{x})d\textbf{x} \\ &= \int_{\partial \Omega} (\frac{\partial f}{\partial n}(\textbf{x})-\textbf{n}\cdot g(\textbf{x})) \phi(\textbf{x}) dS \\ &- \int_\Omega (\nabla^2f(\textbf{x})-\textbf{n} \cdot \nabla g(\textbf{x}))\phi(\textbf{x})d\textbf{x}.
\end{split}
\end{equation*}
Since we want the above equality to hold for every function $\phi: \mathbb{R}^2 \rightarrow \mathbb{R}$, we recovered the Neumann boundary condition:
$$\begin{cases}
    \nabla^2\hat{f}(\textbf{x}) = \nabla \cdot g(\textbf{x}),& \forall \textbf{x} \in \Omega\\
    \nabla \hat{f}(\textbf{x}) \cdot \textbf{n} = \textbf{n} \cdot g(\textbf{x}), & \forall\textbf{x} \in \partial \Omega,
\end{cases}$$
\end{proof}

\bibliographystyle{plainnat} 
\bibliography{main}          

\end{document}